\newfont{\mycrnotice}{ptmr8t at 7pt}
\newfont{\myconfname}{ptmri8t at 7pt}
\DeclareMathSymbol{\lsb@l}{\mathalpha}{letters}{`l}
\DeclareBoldMathCommand{\bbeta}{\beta}
\begin{document}

\definecolor{bleu}{rgb}{0,0,0.8}
\definecolor{orange}{rgb}{0.8,0.4,0}
\def\todo#1{{\color{red} #1}}
\def\comment#1{{\color{orange} #1}}
\def\gathen#1{{#1}}
\def\putvd#1{{#1}}

\newtheorem{theo}{Theorem}[section]
\newtheorem{lem}[theo]{Lemma}
\newtheorem{prop}[theo]{Proposition}
\newtheorem{cor}[theo]{Corollary}
\newtheorem{quest}[theo]{Question}
\newtheorem{ex}[theo]{Example}
\newtheorem{rem}[theo]{Remark}
\newtheorem{deftn}[theo]{Definition}

\newcommand{\N}{\mathbb N}
\newcommand{\F}{\mathbb F}
\newcommand{\Z}{\mathbb Z}
\newcommand{\Q}{\mathbb Q}
\newcommand{\Fp}{\F_p}
\newcommand{\Fpm}{\F_{p^m}}
\newcommand{\Fq}{\F_q}
\newcommand{\Zq}{\Z_q}
\newcommand{\Chi}{\Xi}
\newcommand{\calA}{\mathcal A}

\newcommand{\ring}{\mathfrak A}
\newcommand{\ringp}{\mathfrak B}
\newcommand{\lc}{\text{\rm lc}}
\newcommand{\rgcd}{\text{\sc rgcd}}

\newcommand{\softO}{O\tilde{~}}
\newcommand{\M}{\text{\rm M}}
\newcommand{\Mdp}{\M_{\text{\rm dp}}}
\newcommand{\pC}{\text{\rm pC}}

\newcommand{\trp}{{}^{\text t}}
\newcommand{\Card}{\text{\rm Card}}
\newcommand{\coeff}{\mathrm{Coeff}}

\newcommand{\Sdp}{\ell[[t]]^{\mathrm{dp}}}
\newcommand{\Sdpl}{\ell[[t]]^{\mathrm{dp}}}

\newcommand{\cen}{\mathcal Z}
\newcommand{\norm}{\mathcal N}

\newcommand{\Sol}{\text{\rm Sol}}

%
\permission{Publication rights licensed to ACM. ACM acknowledges that this contribution was authored or co-authored by an employee, contractor or affiliate of a national government. As such, the Government retains a nonexclusive, royalty-free right to publish or reproduce this article, or to allow others to do so, for Government purposes only.}

\conferenceinfo{ISSAC'15,}{July 6--9, 2015, Bath, United Kingdom..\\
{\mycrnotice{Copyright is held by the owner/author(s). Publication rights
licensed to ACM.}}}
\copyrightetc{ACM \the\acmcopyr}
\crdata{978-1-4503-3435-8/15/07\ ...\$15.00.\\
http://dx.doi.org/10.1145/2755996.2756674}

\clubpenalty=10000
\widowpenalty = 10000


\title{A Fast Algorithm for Computing the p-Curvature}
%
%
%
%
%

\numberofauthors{3} 
\author{
%
%
\alignauthor Alin Bostan\\
\affaddr{Inria (France)}\\
  \email{\normalsize \textsf{alin.bostan@inria.fr}}
\alignauthor Xavier Caruso\\
  \affaddr{Universit\'e Rennes 1}\\
  \email{\normalsize \textsf{xavier.caruso@normalesup.org}}
\alignauthor \'Eric Schost\\
  \affaddr{Western University}\\
  \email{\normalsize \textsf{eschost@uwo.ca}}
}

\date\today

\maketitle

\begin{abstract}
We design an algorithm for computing the $p$-curvature of a
differential system in positive characteristic $p$. For a system of
dimension $r$ with coefficients of degree at most $d$, its complexity
is $\softO (p d r^\omega)$ operations in the ground field (where
$\omega$ denotes the exponent of matrix multiplication), whereas the
size of the output is about $p d r^2$. Our algorithm is then
quasi-optimal assuming that matrix multiplication is (\emph{i.e.}
$\omega = 2$). The main theoretical input we are using is the
existence of a well-suited ring of series with divided powers for
which an analogue of the Cauchy--Lipschitz Theorem holds.
\end{abstract}

\vspace{1mm}
 \noindent
 {\bf Categories and Subject Descriptors:} \\
\noindent I.1.2 [{\bf Computing Methodologies}]:{~} Symbolic and Algebraic
  Manipulation -- \emph{Algebraic Algorithms}

 \vspace{1mm}
 \noindent
 {\bf Keywords:} Algorithms, complexity, differential equations, $p$-curvature.

\medskip


\section{Introduction}\label{sec:intro}

We study in this article algorithmic questions related to linear
differential systems in positive characteristic. Let~$k$ be an
arbitrary field of prime characteristic~$p$, and $A$ be an $r \times
r$ matrix with entries in the field $k(x)$ of rational functions
over~$k$. A simple-to-define, yet very important object attached to
the differential system $Y' = AY$ is its so-called
\emph{$p$-curvature}. It is the $p$-th iterate
$\partial_A^p$ of the map $\partial_A : k(x)^r \rightarrow k(x)^r$
that sends $v$ to $v' - A v$. It turns out that it is $k(x)$-linear.
It is moreover classical that
its matrix with respect to the canonical basis of $k(x)^r$ is equal to
the term $A_p$ of the recursive sequence $(A_i)_i$ defined by
\begin{equation}\label{deq:pcurv}
	A_1 = -A \quad \text{and} \quad A_{i+1} = A'_i - A \cdot A_i \quad \text{for} \quad i \geq 1.
\end{equation}
In all what follows, we will thus deliberately identify the
matrix~$A_p$ with the \emph{$p$-curvature of $Y' = AY$}.  The above
recurrence yields an algorithm for computing it, sometimes referred to
as {\em Katz's algorithm}.

The $p$-curvature is related to solutions; it measures to what extent 
the usual Cauchy--Lipschitz theorem applies in characteristic~$p$. More 
precisely, at an ordinary point, the system $Y'=AY$ admits a 
fundamental matrix of power series solutions in $k[[x]]$ if and only if the 
$p$-curvature $A_p$ vanishes. In this case, the system $Y'=AY$ even 
admits a fundamental matrix of solutions which are rational functions in $k(x)$. More 
generally, the dimension of the kernel of $A_p$ is equal to the 
dimension of the space of rational function solutions of $Y'=AY$.

The primary importance of the notion of $p$-curvature relies in its
occurrence in one of the versions of the celebrated Grothendieck--Katz
conjecture~\cite{Katz72,Katz82,ChambertLoir00,Tang14}. This
conjecture, first formulated by Alexandre Grothendieck in the late
1960s, is a local-global principle for linear differential systems,
which states that a linear differential system with rational function
coefficients over a function field admits a fundamental matrix of
algebraic solutions if and only if its $p$-curvatures vanish for
almost all primes~$p$.

In computer algebra, $p$-curvature has been introduced by van der
Put~\cite{vanDerPut95,vanDerPut96}, who popularized it as a tool for
factoring differential operators in
characteristic~$p$. Cluzeau~\cite{Cluzeau03} generalized the approach
to the decomposition of differential systems over $k(x)$. The
$p$-curvature has also been used by Cluzeau and van
Hoeij~\cite{ClvH04} as an algorithmic filter for computing exponential
solutions of differential operators in characteristic zero.

Computing efficiently the $p$-curvature is in itself a challenging
problem, especially for large values of $p$. Our initial motivation
for studying this question emerged from concrete applications, in
lattice path combinatorics~\cite{BoKa08b,BoKa08a} and in statistical
physics~\cite{BBHMWZ08}.
In this article, we address the question of the computation of $A_p$ in good
complexity, with respect to three parameters: the dimension $r$ of the system
$Y'=AY$, the maximum degree $d$ of the rational function entries of $A$, and
the characteristic~$p$ of the ground field. In terms of these quantities, the
arithmetical size of $A_p$ is generically proportional to $p d r^2$ if $r>1$.

\medskip\noindent{\bf Previous work.}
Cluzeau~\cite[Prop.~3.2]{Cluzeau03} observed that the direct algorithm
based on recurrence~\eqref{deq:pcurv} has complexity $\softO( p^2d
r^\omega)$, where $\omega$ is the matrix multiplication exponent and
the soft-O notation $\softO(\,)$ hides polylogarithmic
factors. Compared to the size of the $p$-curvature, this cost is good
with respect to~$r$ and $d$, but not to~$p$. The first subquadratic
algorithm in $p$, of complexity $\softO(p^{1 + \omega/3})$, was
designed in~\cite[\S6.3]{BoSc09}. In some special cases, additional
partial results were obtained in~\cite{BoSc09}, notably an algorithm
of quasi-linear cost {$\softO(p)$} for certain systems of order
$r=2$. However, the question of designing a general algorithm for
computing
$A_p$ with quasi-linear complexity in $p$~remained open. In a related,
but different direction, the article~\cite{BoCaSc14} proposed an
algorithm for computing the characteristic polynomial of the
$p$-curvature in time essentially linear in $\sqrt{p}$, without
computing $A_p$ itself.

\medskip\noindent{\bf Contribution.} We prove that the $p$-curvature $A_p$
can be computed in quasi-linear time with respect to~$p$. More precisely, our
main result (Theorem~\ref{th:pcurv}) states that $\softO\big(p d r^\omega)$
operations in $k$ are sufficient for this task. This complexity
result is quasi-optimal not only with respect to the main parameter~$p$, but
also to~$d$; with respect to the dimension~$r$, it is as optimal as matrix
multiplication. Moreover the algorithm we obtain is highly parallelizable 
by design.
The key tools underlying the proof of Theorem~\ref{th:pcurv}
are the notion of divided power rings in characteristic $p$, and a new formula
for the $p$-curvature (Propositions~\ref{prop:pcurv} and~\ref{prop:psiSAp2}) in
terms of divided power series. Crucial ingredients are the fact that a
Cauchy--Lipschitz theorem for differential systems holds over divided power
rings (Proposition~\ref{prop:CL}) and the fact that Newton iteration can be
used to efficiently compute (truncations of) fundamental matrices of divided
power solutions.

\medskip\noindent{\bf Structure of the paper.} In
Section~\ref{subsec:diffop}, we recall the main theoretical properties of the
basic objects used in this article. Section~\ref{subsec:dp} is devoted to the
existence and the computation of solutions of differential systems in divided
power rings. In Section \ref{sec:pcurv}, we move to the main objective of the
article, the computation of the $p$-curvature: after relating $A_p$ to the
framework of divided powers, we describe our main algorithm for $A_p$, of
complexity $\softO(pdr^\omega)$. We conclude in Section~\ref{sec:timings} by describing
the implementation of our algorithm and some benchmarks.

\medskip\noindent{\bf Complexity measures.} Throughout this article,
we estimate the cost of our algorithms by counting arithmetic
operations in the base ring or field at unit cost.  

We use standard complexity notations. The letter $\omega$ refers to a
feasible exponent for matrix multiplication (\emph{i.e.} there exists
an algorithm for multiplying $n \times n$ matrices over a ring $\ring$
with at most $O(n^\omega)$ operations in $\ring$); the best known
bound is $\omega <2.3729$ from~\cite{LeGall14}. The soft-O notation
$\softO( \cdot)$ indicates that polylogarithmic factors are omitted;
in particular, we will use the fact that many arithmetic operations on
univariate polynomials of degree $d$ can be done in $\softO(d)$
operations: addition, multiplication, Chinese remaindering,
\emph{etc}, the key to these results being fast polynomial
multiplication~\cite{ScSt71,Schoenhage77,CaKa91,HaHoLe14}.  A general
reference for these questions is~\cite{GaGe03}.


\section{Theoretical setting}
\label{subsec:diffop}

We introduce and briefly recall the main properties of the theoretical
objects we are going to use in this article.  All the material
presented in this section is classical; a general reference
is~\cite{PuSi03}.

\medskip

\noindent
{\bf Definitions and notations.}
Let $\ring$ be a commutative ring with unit. We recall that a 
\emph{derivation} on $\ring$ is an additive map $' : \ring \to \ring$, 
satisfying the Leibniz rule
$(fg)' = f' g + f g'$ for all $f, g \in \ring$.
The image $f'$ of $f$ under the derivation is called the 
\emph{derivative} of $f$. {}From now on, we assume that $\ring$ is
equipped with a derivation. A \emph{differential system} with
coefficients in $\ring$ is an equation of the form $Y' = AY$
where $A$ is a given $r \times r$ matrix with coefficients in $\ring$
(for a certain positive integer $r$), the unknown $Y$ is a column
vector of length $r$ and $Y'$ denotes the vector obtained from $Y$ by
taking the derivative component-wise. The integer $r$ is called the
\emph{dimension} of the system.  We recall briefly that a linear
differential equation:
\begin{equation}
\label{eq:diffeq}
a_r y^{(r)} + \cdots + a_1 y' + a_0 y = 0
\quad \text{(with $a_i \in \ring$)}
\end{equation}
can be viewed as a particular case of a differential system. Indeed, 
defining the companion matrix
\begin{equation}\label{eq:comp}
C = \left( \begin{matrix}
  & &&  -\frac{a_0}{a_r} \\
1 & &&   -\frac{a_1}{a_r} \\
  &\ddots &&  \vdots \\
  &     &1& -\frac{a_{r-1}}{a_r}
\end{matrix} \right)
\end{equation}
and $A=\trp{C}$, the solutions of the system 
$Y' = AY$ are exactly the vectors of the form $\trp{(y, y', \ldots, 
y^{(r-1)})}$ where $y$ is a solution of \eqref{eq:diffeq}.
In this correspondence, the order of the differential equation agrees
with the dimension of the associated differential system.

\medskip

\noindent
{\bf Differential modules.} 
A \emph{differential module} over $\ring$ is a pair $(M, \partial)$ 
where $M$ is an $\ring$-module and $\partial : M \to M$ is an additive 
map satisfying a Leibniz-like rule, which is:
\begin{equation}
\label{eq:Leibniz}
\forall f \in \ring, \, \forall x \in M, \quad
\partial (f x) = f' \cdot x + f \cdot \partial(x).
\end{equation}
There exists a canonical one-to-one correspondence between differential 
systems and differential modules $(M,\partial)$ for which $M = \ring^r$ 
for some $r$: to a differential system $Y' = AY$ of dimension $r$, we 
attach the differential module $(\ring^r, \partial_A)$ where $\partial_A 
: \ring^r \to \ring^r$ is the function mapping $X$ to $X' - AX$. Under 
this correspondence, the solutions of $Y' = AY$ are exactly  
vectors in the kernel of $\partial_A$.

To a differential equation as~\eqref{eq:diffeq}, one can associate
the {\em differential operator} 
$L = a_r \partial^r + a_{r-1} \partial^{r-1} + \cdots + a_1 \partial + a_0$;
it lies in the non-commutative ring $\ring \langle \partial\rangle$,
endowed with the usual addition of polynomials and a multiplication
ruled by the relation $\partial \cdot f = f \cdot \partial + f'$ for
all $f \in \ring$ (note that, as often in the literature, we are using
$\partial$ to denote either the structure map of a differential
module, and a non-commutative indeterminate). 

Then, if $a_r$ is a unit in $\ring$, one can further associate to $L$
the quotient $\ring\langle \partial\rangle /\ring\langle
\partial\rangle L\simeq \ring^r$. The differential structure inherited
from $\ring \langle \partial \rangle$ makes it a differential module
with structure map $X \in \ring^r \mapsto X'+ C X$, where $C$ is the
companion matrix defined above; in other words, this is the module
$(\ring\langle \partial\rangle /\ring\langle
\partial\rangle L, \partial_{-C})$, with the previous notation.

\medskip

\noindent
{\bf Scalar extension.}
Let $\ring$ and $\ringp$ be two rings equipped with derivations and
let $\varphi : \ring \to \ringp$ be a ring homomorphism commuting
with derivation. From a given differential system $Y' = AY$ with
coefficients in $\ring$, one can build a differential system over
$\ringp$ by applying $\varphi$: it is $Y' = \varphi(A) Y$, where
$\varphi(A)$ is the matrix obtained from $A$ by applying $\varphi$
entry-wise.

This operation admits an analogue at the level of differential modules: 
to a differential module $(M, \partial)$ over $\ring$, we attach the
differential module $(M_\ringp, \partial_\ringp)$ over $\ringp$ where
$M_\ringp = \ringp \otimes_{\varphi,\ring} M$ and $\partial_\ringp : 
M_\ringp \to M_\ringp$ is defined by:
$$\forall f \in \ringp, \, \forall x \in M, \quad
\partial_\ringp(f \otimes x) = f' \otimes x + f \otimes \partial(x).$$
It is easily seen that if $(M, \partial)$ is a differential module 
associated to the system $Y' = AY$ then $(M_\ringp, \partial_\ringp)$ 
is that associated to the system $Y' = \varphi(A) Y$.

\medskip

\noindent
{\bf The $p$-curvature.}  Let $k$ be any field of
characteristic~$p$. We assume here that $\ring$ is the field $k(x)$
--- consisting of rational functions over $k$ --- equipped with the
standard derivation. The \emph{$p$-curvature} of a differential module
$(M,\partial)$ over $k(x)$ is defined as the mapping $\partial^p : M
\to M$. It follows from the Leibniz rule \eqref{eq:Leibniz} and the
fact that the $p$-th derivative of any element of $k(x)$ vanishes that
the $p$-curvature is $k(x)$-linear.

This definition extends to differential systems as follows: the 
$p$-curvature of the system $Y' = AY$ is the $k(x)$-linear map 
$\partial_A^p : M_A \to M_A$ where $(M_A, \partial_A)$ is the 
corresponding differential module. One can check that the matrix of 
$\partial_A^p$ (in the canonical basis of $M_A$) is the $p$-th term of 
the recursive sequence $(A_i)$ defined in~\eqref{deq:pcurv}.

Considering again a differential operator $L$ and the associated
differential module $(\ring\langle \partial\rangle /\ring\langle
\partial\rangle L, \partial_{-C})$, for the associated
companion matrix $C$, we obtain the usual recurrence $A_1=C$ and
$A_{i+1} = A'_i + C \cdot A_i$. The $p$-curvature of $\ring\langle
\partial\rangle /\ring\langle \partial\rangle L$ will simply be called
the $p$-curvature of $L$.


\section{Series with divided powers}\label{subsec:dp}

In all this section, we let $\ell$ be a ring in which $p$ vanishes.
We recall the definition of the divided power ring over $\ell$, and
its main properties --- mainly, a Cauchy--Lipschitz theorem that will
allow us to compute solutions of differential systems. We show how
some approaches that are well-known for power series solutions carry
over without significant changes in this context. 
Most results in this section are not new; those from \S\ref{ssec:dp} and \S\ref{ssec:divpow}
are implicitly contained in \cite{Berthelot74,BeOg78}, while the theoretical basis of \S\ref{ssec:C-L} is similar to~\cite{KePr00}. 


\subsection{The ring $\Sdp$}\label{ssec:dp}

Let $\Sdp$ be the ring of formal series of the form:
\begin{equation}
\label{eq:expanddp}
f = a_0 + a_1 \gamma_1(t) + a_2 \gamma_2(t) + \cdots + a_i \gamma_i(t)
+ \cdots
\end{equation}
where the $a_i$'s are elements of $\ell$ and each $\gamma_i(t)$ is a
symbol which should be thought of as $\frac{t^i} {i!}$. The
multiplication on $\Sdp$ is defined by the rule $\gamma_i(t) \cdot
\gamma_j(t) = \binom {i+j}i \cdot \gamma_{i+j}(t)$.

\begin{rem}
The ring $\Sdp$ is not the PD-envelope in the sense of 
\cite{Berthelot74,BeOg78} of $\ell[[t]]$ with respect to the ideal $(t)$ 
but its completion for the topology defined by the divided powers 
ideals. Taking the completion is essential to have an analogue of 
the Cauchy--Lipschitz Theorem (\emph{cf} Proposition \ref{prop:CL}).
\end{rem}
Invertible elements of $\Sdp$ are easily described: they are exactly 
those for which the ``constant'' coefficient $a_0$ is invertible in 
$\ell$. The ring $\Sdp$ is moreover endowed with a derivation defined by 
$f' = \sum_{i=0}^\infty a_{i+1} \gamma_i(t)$ for $f= \sum_{i=0}^\infty 
a_i \gamma_i(t)$. It then makes sense to consider differential systems 
over $\Sdp$. A significant difference
with power series is the existence of an integral operator: it maps
$f$ as above to $\int f = \sum_{i=0}^\infty a_i \gamma_{i+1}(t)$ 
and satisfies $(\int f)'=f$ for all $f$.

\medskip

\noindent
{\bf Divided power ideals.}  For all positive integers $N$, we denote
by $\Sdp_{\geq N}$ the ideal of $\Sdp$ consisting of series of the
form $\sum_{i \geq N} a_i \gamma_i(t)$. The quotient $\Sdp/\Sdp_{\geq
  N}$ is a free $\ell$-module of rank $N$ and a basis of
it is $(1, \gamma_1(t), \ldots, \gamma_{N-1}(t))$. In particular, for
$N = 1$, the quotient $\Sdp/\Sdp_{\geq 1}$ is isomorphic to $\ell$: in
the sequel, we shall denote by $f(0) \in \ell$ the reduction of an
element $f \in \Sdp$ modulo $\Sdp_{\geq 1}$. On the writing
\eqref{eq:expanddp}, it is nothing but the constant coefficient $a_0$
in the expansion of $f$.  

We draw the reader's attention to the fact that $\Sdp_{\geq N}$ is
\emph{not} stable under derivation, so the quotients $\Sdp /
\Sdp_{\geq N}$ do \emph{not} inherit a derivation.

\medskip

\noindent
{\bf Relationship with $\ell[t]$.}
There exists a natural map $\varepsilon : \ell[t] \to \Sdp$ taking a 
polynomial $\sum_i a_i t^i$ to $\sum_{i=0}^{p-1} a_i i! \cdot 
\gamma_i(t)$. The latter sum stops at $i = p-1$ because $i!$ becomes 
divisible by $p$ after that. Clearly, the kernel of $\varepsilon$ is 
the principal ideal generated by $t^p$. Hence $\varepsilon$ factors
through $\ell[t]/t^p$ as follows:
\begin{equation}
\label{eq:defiota}
\ell[t] \stackrel{\text{pr}}{\longrightarrow}
\ell[t]/t^p \stackrel{\iota}{\longrightarrow} \Sdp
\end{equation}
where $\text{pr}$ is the canonical projection taking a polynomial
to its reduction modulo $t^p$. We observe moreover that the ideal
$t^p \ell[t]$ is stable under derivation and, consequently, that the
quotient ring $\ell[t]/t^p$ inherits a derivation. Furthermore, the 
two mappings in \eqref{eq:defiota} commute with the derivation.


\subsection{Computations with divided powers}\label{ssec:divpow}

It turns out that the
$\gamma_n(t)$'s can all be expressed in terms of only few of them,
resulting in a more flexible description of the ring $\Sdp$. To make
this precise, we set $t_i = \gamma_{p^i}(t)$ and first observe that
$t_i^n = n! \cdot \gamma_{n p^i}(t)$ for all $i$ and $n$;
this is proved by induction on $n$, using the equalities
$$\textstyle
t_i^{n+1} = n! \cdot \gamma_{n p^i}(t) \cdot \gamma_{p^i}(t) = n!
\cdot \binom {(n+1)p^i}{p^i} \gamma_{(n+1) p^i}(t),$$
since Lucas' Theorem shows that $\binom {(n+1)p^i}{p^i} \equiv n+1
\pmod p$. In particular $t_i^p = 0$ for all $i$.

\begin{prop}
\label{prop:relti}
Let $n$ be a positive integer and $n = \sum_{i=0}^s n_i p^i$ its
writing in basis $p$. Then:
$$\gamma_n(t) = 
\gamma_{n_0}(t) \cdot \gamma_{n_1 p}(t) \cdots \gamma_{n_s p^s}(t) =
\frac {t_0^{n_0}}{n_0!} \cdot
\frac {t_1^{n_1}}{n_1!} \cdots
\frac {t_s^{n_s}}{n_s!}.$$
\end{prop}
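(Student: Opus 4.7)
The plan is to reduce the statement to Lucas' theorem exactly as was done in the paragraph preceding the proposition, but iterated across all digits.

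First I would handle the second equality. The identity $t_i^{n_i} = n_i!\cdot\gamma_{n_i p^i}(t)$ has just been established for every $i$ and every $n_i \geq 0$. Since the $n_i$'s are base-$p$ digits we have $0 \leq n_i < p$, so $n_i!$ is coprime to $p$ and thus a unit in $\ell$ (recall $p=0$ in $\ell$, so $\ell$ is an $\F_p$-algebra). Dividing gives $\gamma_{n_i p^i}(t) = t_i^{n_i}/n_i!$, and multiplying these equalities for $i = 0, \ldots, s$ turns the first product into the second.

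For the first equality, I would argue by induction on $s$, multiplying the $\gamma_{n_i p^i}(t)$'s one at a time using the rule $\gamma_a(t) \cdot \gamma_b(t) = \binom{a+b}{a} \gamma_{a+b}(t)$. At each step, writing $m_{i} = \sum_{j=0}^{i-1} n_j p^j$, the product $\gamma_{m_i}(t) \cdot \gamma_{n_i p^i}(t)$ equals $\binom{m_i + n_i p^i}{n_i p^i}\, \gamma_{m_i + n_i p^i}(t)$. Because $m_i < p^i$ and $0 \leq n_i < p$, the base-$p$ digits of $m_i$ and $n_i p^i$ are disjoint, so Lucas' theorem gives
\[
\binom{m_i + n_i p^i}{n_i p^i} \equiv \binom{n_0}{0}\cdots\binom{n_{i-1}}{0}\binom{n_i}{n_i}\binom{0}{0}\cdots \equiv 1 \pmod p,
\]
and this coefficient is therefore $1$ in $\ell$. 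Iterating from $i=1$ to $i=s$ collapses the product down to $\gamma_{m_{s+1}}(t) = \gamma_n(t)$, which is the desired formula.

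There is no real obstacle: the only content beyond manipulation of the multiplication rule is the combinatorial input from Lucas' theorem, already used in the previous paragraph for a single digit, and here just applied repeatedly along the base-$p$ expansion.
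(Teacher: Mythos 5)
Your proof is correct and follows essentially the same route as the paper: induction on the number of base-$p$ digits, with Lucas' theorem showing the relevant binomial coefficient (of two numbers with disjoint base-$p$ digits) is $1 \bmod p$, and the second equality obtained from $t_i^{n_i} = n_i!\,\gamma_{n_i p^i}(t)$ with $n_i!$ a unit. The only (immaterial) difference is that the paper peels off the lowest digit via $n = a + bp$, whereas you accumulate the digits from the bottom up.
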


\begin{proof}
The first equality is proved by induction on $s$ using the fact that
if $n = a + bp$ with $0\leq a < p$, then $\gamma_a \gamma_{bp} =
\gamma_n$, since $\binom{a+bp}{a}\equiv 1 \pmod p$.  The second
equality then follows from the relations $t_i^{n_i} = n_i! \cdot
\gamma_{n_i p^i}(t)$. \end{proof}

A corollary of the above proposition is that elements of $\Sdp$ can be
alternatively described as infinite sums of monomials $a_{n_0, \ldots,
  n_s} \cdot t_0^{n_0} \cdot t_1^{n_1} \cdots t_s^{n_s}$ where the
$n_i$'s are integers in the range $[0,p)$ and the coefficient $a_{n_0,
    \ldots, n_s}$ lies in~$\ell$. The product in $\Sdp$ is then the
  usual product of series subject to the additional rules $t_i^p = 0$
  for all $i$.


More precisely, restricting ourselves to some given precision of the
form $N=n p^s$, we deduce from the above discussion the following
corollary.
\begin{cor}
\label{cor:quoSdp}
For $N=n p^s$, with $s \in \N$ and $n \in \{1, \ldots, p\}$, there is
a canonical isomorphism of $\ell$-algebras:
$$\Sdp/\Sdp_{\geq N} \, \simeq \,
\ell[t_0, \ldots, t_s]/(t_0^p, \ldots, t_{s-1}^p, t_s^n).$$
\end{cor}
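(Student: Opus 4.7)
The plan is to construct a ring homomorphism from the polynomial ring to the divided power quotient by sending each indeterminate to the appropriate divided power, check that the relations $t_i^p$ (for $i<s$) and $t_s^n$ lie in the kernel, and then verify bijectivity by exhibiting explicit $\ell$-bases on both sides that match under the map.

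Concretely, I would start by defining
$$\varphi : \ell[t_0,\ldots,t_s] \longrightarrow \Sdp/\Sdp_{\geq N}, \qquad t_i \longmapsto \gamma_{p^i}(t) \bmod \Sdp_{\geq N}.$$
Since the source is a free commutative $\ell$-algebra on the $t_i$, this is unambiguously defined. The preceding discussion in \S\ref{ssec:divpow} gives $t_i^p = 0$ in $\Sdp$ for every $i$, so $\varphi(t_i^p) = 0$ for $i < s$. For the last generator, the identity $t_s^n = n!\,\gamma_{np^s}(t)$ (established in the paragraph before Proposition~\ref{prop:relti}) shows $\varphi(t_s^n)$ is the class of $n!\,\gamma_N(t)$, which vanishes modulo $\Sdp_{\geq N}$. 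Hence $\varphi$ factors through a map
$$\bar\varphi : \ell[t_0,\ldots,t_s]/(t_0^p,\ldots,t_{s-1}^p,t_s^n) \longrightarrow \Sdp/\Sdp_{\geq N}.$$

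Next, I would compare $\ell$-bases. On the right-hand side, a basis is $(\gamma_k(t))_{0 \leq k < N}$. On the left-hand side, a basis is formed by the monomials $t_0^{n_0} t_1^{n_1} \cdots t_s^{n_s}$ with $0 \leq n_i < p$ for $i < s$ and $0 \leq n_s < n$; note that such tuples $(n_0,\ldots,n_s)$ are precisely the base-$p$ digit sequences of integers $k = n_0 + n_1 p + \cdots + n_s p^s$ in the range $0 \leq k < N = np^s$, so both bases are indexed by the same set. Proposition~\ref{prop:relti} then gives
$$\bar\varphi\bigl(t_0^{n_0} \cdots t_s^{n_s}\bigr) = n_0!\, n_1! \cdots n_s! \cdot \gamma_k(t) \pmod{\Sdp_{\geq N}}.$$
Since each $n_i < p$, the factorials $n_i!$ are integers coprime to $p$, hence invertible in $\ell$ (recall $p=0$ in $\ell$, making $\ell$ an $\F_p$-algebra). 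Thus $\bar\varphi$ sends the source basis to a scalar multiple, by a unit, of the target basis element.

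The main obstacle is really just this bookkeeping: matching the constraint $k < np^s$ with the digit constraints $n_i < p$ (for $i<s$) and $n_s < n$, and checking that nothing subtle goes wrong in the boundary case $n = p$ (which is allowed, and simply corresponds to the usual full range $0 \leq n_s < p$ with the generator $t_s^p$ also being zero). Once the bijection on bases is in place, $\bar\varphi$ is an isomorphism of $\ell$-modules, and since it was built as an algebra map it is an isomorphism of $\ell$-algebras, which is the statement of the corollary.
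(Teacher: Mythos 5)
Your proposal is correct and follows essentially the same route as the paper, which deduces the corollary from Proposition~\ref{prop:relti} and the resulting monomial description of $\Sdp$ in terms of the $t_i$'s; you simply make explicit the homomorphism, the kernel check, and the unit factorials $n_i!$ that the paper leaves implicit. The digit-counting bookkeeping matching $0 \leq k < np^s$ with the exponent constraints is handled correctly, including the boundary case $n = p$.
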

For instance, if we take $s=0$ and $N=n$ in $\{1,\dots,p\}$, we obtain
the isomorphism $\Sdp/\Sdp_{\geq N} \simeq \ell[t]/t^N$.

In terms of complexity, the change of bases between left- and
right-hand sides can both be done in $\softO(N)$ operations in $\ell$:
all the factorials we need can be computed once and for all for
$O(\min(N, p))$ operations; then each monomial conversion takes
$O(s)=O(\log(N))$ operations, for a total of $O( N \log(N)) =
\softO(N)$.
 
The previous corollary is useful in order to devise a multiplication
algorithm for divided powers, since it reduces this question to
multivariate power series multiplication (addition takes linear time
in both bases).
To multiply in $\ell[t_0, \ldots, t_s]/(t_0^p, \ldots, t_{s-1}^p,
t_s^n)$, one can use a direct algorithm: multiply and discard unwanted
terms. Using for instance Kronecker's substitution
and FFT-based univariate arithmetic, we find that a 
multiplication in $\Sdp$ at precision $N$ (\emph{i.e.} modulo 
$\Sdp_{\geq N}$) can be performed with 
$\softO(2^{\log_p N} N)$ operations in $k$. A solution that leads 
to a cost $N^{1+\varepsilon}$ for any $\varepsilon > 0$ is 
in~\cite{Schost05}, but the former result will be sufficient.



\subsection{The Cauchy--Lipschitz Theorem}  \label{ssec:C-L}

A nice feature of the ring $\Sdp$ --- which does not hold for
$\ell[[t]]$ notably --- is the existence of an analogue of the
classical Cauchy--Lipschitz theorem. This property will have a
fundamental importance for the purpose of our paper; see for
instance~\cite[Proposition~4.2]{KePr00} for similar considerations.

\begin{prop}
\label{prop:CL}
Let $Y' = AY$ be a differential system of dimension $r$ with 
coefficients in $\Sdp$. For all initial data $V \in \ell^r$
(considered as a column vector) the following Cauchy problem
has a unique solution in $\Sdp$:
$$\left\{\!
\begin{array}{ll}
Y' = A\cdot Y \smallskip \\
Y(0) = V.
\end{array} \right.$$
\end{prop}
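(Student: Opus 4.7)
The plan is to reformulate the Cauchy problem coefficient-wise in the divided power basis, turning it into an explicit recursion that determines the solution from $V$ without any division.

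Write $A = \sum_{j \geq 0} A_j \gamma_j(t)$ with $A_j \in M_r(\ell)$, and look for a solution of the form $Y = \sum_{i \geq 0} Y_i \gamma_i(t)$ with $Y_i \in \ell^r$. The initial condition $Y(0) = V$ is equivalent to $Y_0 = V$. By the definition of the derivation on $\Sdp$, we have $Y' = \sum_{i \geq 0} Y_{i+1} \gamma_i(t)$. By the multiplication rule $\gamma_j(t)\gamma_k(t) = \binom{j+k}{j}\gamma_{j+k}(t)$, the $\gamma_i(t)$-coefficient of $AY$ equals $\sum_{j+k=i} \binom{i}{j} A_j Y_k$. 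Equating coefficients, the Cauchy problem is equivalent to
\[
Y_0 = V, \qquad Y_{i+1} = \sum_{j+k=i} \binom{i}{j}\, A_j\, Y_k \quad (i \geq 0).
\]

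Each $Y_{i+1}$ is thus an explicit $\ell$-linear combination of the previously determined $Y_0, \ldots, Y_i$, with no division involved. Hence the sequence $(Y_i)_{i\geq 0}$ is uniquely determined by $V$; conversely, any such sequence defines a well-defined element of $(\Sdp)^r$, which is then the unique solution of the Cauchy problem.

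There is no real obstacle here: this is a standard coefficient-extraction argument à la Picard. The point to stress is that it goes through over $\Sdp$ even when $p$ vanishes in $\ell$, whereas the analogous computation in $\ell[[t]]$ would require dividing by $i+1$ at each step and break down as soon as $i+1 \equiv 0 \pmod p$. The divided power symbols $\gamma_i(t)$ have absorbed precisely the factorials that obstruct the naive power series computation. Equivalently, one may view the argument as solving the fixed-point equation $Y = V + \int (AY)$ by successive approximation modulo the ideals $\Sdp_{\geq N}$, using the integral operator introduced in \S\ref{ssec:dp}; the two presentations are the same computation.
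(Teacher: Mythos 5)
Your proof is correct and follows exactly the paper's own argument: expand $A$ and $Y$ in the divided power basis, use the multiplication rule to equate coefficients, and observe that the resulting recursion $Y_0=V$, $Y_{n+1}=\sum_{i=0}^{n}\binom{n}{i}A_i Y_{n-i}$ determines the $Y_n$ uniquely with no divisions. The extra remarks about why this fails for $\ell[[t]]$ and the fixed-point reformulation are accurate but not needed.
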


\begin{proof}
Let us write the expansions of $A$ and $Y$:
$$A = \sum_{i=0}^\infty A_i \gamma_i(t)
\quad \text{and} \quad
Y = \sum_{i=0}^\infty Y_i \gamma_i(t)$$
where the $A_i$'s and $Y_i$'s have coefficients in $\ell$.
The Cauchy problem translates to $Y_0 = V$ and 
$Y_{n+1} = \sum_{i=0}^n \binom n i \cdot A_i \cdot Y_{n-i}$.
It is now clear that it has a unique solution.
\end{proof}

Of course, Proposition \ref{prop:CL} extends readily to the case where 
the initial data $V$ is any matrix having $r$ rows. In particular, 
taking $V = I_r$ (the identity matrix of size $r$), we find that
there exists a unique $r \times r$ matrix $Y$ with coefficients in
$\Sdp$ such that $Y(0) = I_r$ and $Y' = A\cdot Y$. This matrix is often
called a \emph{fundamental system of solutions}.

\medskip 

\noindent
{\bf Finding solutions using Newton iteration.}  In characteristic
zero, it is possible to compute power series solutions of a
differential system such as $Y' = A\cdot Y$ using Newton iteration; an
algorithm for this is presented on \cite[Fig.~1]{BCOSSS07}.

One can use 
this algorithm to compute a
fundamental system of solutions in our context.  For this, we first
need to introduce two notations. Given an element $f \in \Sdp$ written
as $f = \sum_i a_i \gamma_i(t)$ together with an integer $m$, we set
$\lceil f \rceil^m = \sum_{i=0}^{m-1} a_i \gamma_i(t)$.
Similarly, if $M$ is a matrix with coefficients in $\Sdp$, we define 
$\lceil M \rceil^m$ and $\int M$ by applying the corresponding
operations entry-wise.

\noindent\hrulefill

\noindent {\bf Algorithm} {\tt fundamental\_solutions}

\noindent{\bf Input:} a differential system $Y' = AY$, an integer $N$

\noindent{\bf Output:} the fund. system of solutions modulo 
$\Sdp_{\geq N}$

\smallskip\noindent 1.\ $Y = I_r + t \: A(0)$;\, $Z = I_r$;\, $m = 2$

\smallskip\noindent 2.\ {\bf while} $m \leq N/2$:

\smallskip\noindent 3.\ \hspace{0.4cm} 
$Z = Z + \big\lceil Z (I_r - YZ) \big\rceil^m$

\smallskip\noindent 4.\ \hspace{0.4cm} 
$Y = Y - \Big\lceil Y \big( \int Z \cdot (Y' - \lceil A \rceil^{2m-1} Y) 
\big) \Big\rceil^{2m}$

\smallskip\noindent 5.\ \hspace{0.4cm} $m = 2m$

\smallskip\noindent 6.\ {\bf return} $Y$

\noindent\hrulefill

\smallskip

\noindent
Correction is proved as in the classical case
\cite[Lemma~1]{BCOSSS07}.

Let us take $n \in \{2,\dots,p\}$ and $s \in \N$ such that $n-1$ is
the last digit of $N$ written in basis $p$, and $s$ the corresponding
exponent; then, we have $(n-1)p^s \le N < n p^s$.  Since we are only
interested in costs up to logarithmic factors, we may assume that we
do all operations at precision $n p^s$ (a better analysis would take
into account the fact that the precision grows quadratically). 

By Corollary~\ref{cor:quoSdp} and the discussion that follows,
arithmetic operations in $\Sdp/\Sdp_{\geq n p^s}$ take time
$\softO(2^{\log_p N} N)$. This is also the case for differentiation
and integration, in view of the formulas given in the previous
subsection; truncation is free. The total complexity of Algorithm {\tt
  fundamental\_solutions} is therefore $\softO(2^{\log_p N} N
r^\omega)$ operations in $\ell$, where $r$ is the dimension of the
differential system.  If $N=p^{O(1)}$, which is what we need later on,
this is $\softO(N r^\omega)$.

\medskip

\noindent
{\bf The case of differential operators.}
We now consider the case of the differential system associated to a
differential operator $L = a_r \partial^r + \cdots + a_1 \partial
+ a_0 \in \Sdp\langle \partial \rangle$.
We will work under the
following few assumptions: we assume that $a_r$ is invertible, and
that there exists an integer $d < p$ such that all $a_i$'s can be
written
$a_i = \alpha_{i,0} + \alpha_{i,1} \gamma_1(t) + \cdots +
\alpha_{i,d} \gamma_d(t)$ for some coefficients $\alpha_{i,j}$ in
$\ell$; thus, by assumption, $\alpha_{r,0}$ is a unit in $\ell$.
Our goal is still to compute a basis of solutions up to precision $N$;
the algorithm is a direct adaptation of a classical construction to
the case of divided powers.

In all that follows, we let $f_0,\dots,f_{r-1}$ be
the solutions of $L$ in $\Sdp$, such that $f_i$ is the unique solution
of the Cauchy problem (\emph{cf} Proposition \ref{prop:CL}):
\begin{equation}
\label{eq:systop}
L(f_i) = 0 
\quad ; \quad
f_i^{(j)}(0) = \delta_{ij} \quad \text{for } 0 \leq j < r
\end{equation}
where $\delta_{ij}$ is the Kronecker delta. For $f =
\sum_{j=0}^\infty \xi_j \gamma_j(t)$ in $\Sdpl$, a direct computation
shows that the $n$-th coefficient of $L(f)$ is
$\sum_{i=0}^r \sum_{j=0}^n \binom n j \alpha_{i,j} \xi_{n+i-j}$.
Assume $L(f)=0$. Then, extracting the term in $\xi_{n+r}$, and 
using that $\alpha_{i,j}=0$ for $j > d$, we get
$\xi_{n+r} = \frac {-1} {\alpha_{r,0}} \sum_{i=0}^{r-1} \sum_{j=0}^d 
\binom n j \alpha_{i,j} \xi_{n+i-j}$. Letting $m=i{-}j$, we find
$\xi_{n+r} = \sum_{m=-d}^{r-1} A_m(n) \xi_{n+m}$ with 
$$A_m(n) =  \frac {-1} {\alpha_{r,0}}  \sum_{i=0}^{r-1} 
{\textstyle \binom n {i-m}} \alpha_{i,i-m}\\
 = \!\!\!\!
\sum_{\substack{0 \le i \le r-1 \vspace{0.5mm} \\ 0 \le i-m \le d}}
\frac {-\alpha_{i,i-m}}{\alpha_{r,0} (i-m)!}\, n^{\underline{i-m}}$$
and $n^{\underline{i-m}} = n(n-1) \cdots (n-(i-m-1))$ is a falling
factorial. The expression above for $A_m$ is well-defined, since we
assumed that $d < p$, and shows that $A_m$ is a polynomial of degree
at most $d$.

From this, writing the algorithm is easy. We need two subroutines:
${\tt from\_falling\_factorial}(F)$, which computes the expansion on
the monomial basis of a polynomial of the form $F=\sum_{0 \le j \le n}
f_j n^{\underline j}$, and ${\tt eval}(F, N)$, which computes the
values of a polynomial $F$ at the $N$ points $\{0,\dots,N{-}1\}$. The
former can be done using the divide-and-conquer algorithm
of~\cite[Section 3]{BoSc05} in time $\softO(n)$; the latter by the
algorithm of~\cite[Chapter~10]{GaGe03}, in time $\softO(\deg(F) +
N)$.
The previous discussion leads to the algorithm 
\texttt{solutions\_operator} below. In view of the previous discussion, 
the cost analysis is straightforward (at step 2., notice that all 
required factorials can be computed in time $O(d)$). The costs reported 
in the pseudo-code indicate the {\em
  total} amount of time spent at the corresponding line.

\noindent\hrulefill

\noindent {\bf Algorithm} {\tt solutions\_operator}

\noindent{\bf Input:} a differential operator $L \in \Sdpl\langle
\partial\rangle$ of bidegree $(d,r)$, with $d < p$; an integer $N$

\noindent{\bf Output:} the solutions $f_0,\dots,f_{r-1}$ at precision $N$

\smallskip\noindent 1.\ {\bf for} $m=-d,\dots,r-1$:

\smallskip\noindent 2.\ \hspace{0.4cm} $\hat A_m = \sum_{0 \le i \le r-1, 0 \le i-m \le d} \frac {-\alpha_{i,i-m}}{\alpha_{r,0} (i-m)!}\, x^{\underline{i-m}}$

\noindent \phantom{2.\ \hspace{0.4cm}}
{\sc Cost:} $O(d(r+d))$

\smallskip\noindent 3.\ \hspace{0.4cm} $A_m = {\tt from\_falling\_factorial}(\hat A_m)$

\noindent \phantom{3.\ \hspace{0.4cm}}
{\sc Cost:} $\softO(d(r+d))$

\smallskip\noindent 4.\ \hspace{0.4cm} Store ${\tt eval}(A_m, N-r)$

\noindent \phantom{4.\ \hspace{0.4cm}}
{\sc Cost:} $\softO((d+N)(r+d))$

\smallskip\noindent 5.\ {\bf for} $i=0,\dots,r-1$:

\smallskip\noindent 6.\ \hspace{0.4cm} $f_i=[0,\dots,0,1,0,\dots,0]$ ($i$th unit vector of length $r$)

\noindent \phantom{6.\ \hspace{0.4cm}}
{\sc Cost:} $O(r^2)$

\smallskip\noindent 7.\ \hspace{0.4cm} {\bf for} $n=0,\dots,N-r-1$:

\smallskip\noindent 8.\ \hspace{0.8cm} $f_{i,n+r} = \sum_{m=-d}^{r-1} A_m(n) f_{i,n+m}$

\noindent \phantom{8.\ \hspace{0.8cm}}
{\sc Cost:} $O(r N (r+d))$

\smallskip\noindent 9.\ {\bf return} $f_0,\dots,f_{r-1}$

\noindent\hrulefill

\smallskip

Altogether, we obtain the following result, where we use the
assumption $N > d$ to simplify slightly the cost estimate.

\begin{lem}
\label{lem:fm}
Suppose that $p < d$. Given a positive $N > d$, the classes of $f_0,
\ldots, f_{r-1}$ modulo $\Sdpl_{\geq N}$ can be computed with at most
$O(r N (r+d))$ operations in $\ell$.
\end{lem}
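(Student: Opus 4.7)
The plan is to verify correctness of Algorithm \texttt{solutions\_operator} and then add up the costs of its lines, essentially as already reported in the pseudo-code.

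Correctness is the easy part: the recurrence $\xi_{n+r} = \sum_{m=-d}^{r-1} A_m(n)\, \xi_{n+m}$ was derived just before the algorithm from the identity $L(f_i)=0$ together with the assumption that $\alpha_{r,0}$ is a unit. The Cauchy conditions $f_i^{(j)}(0) = \delta_{ij}$ translate directly into the initial segment $\xi_0,\ldots,\xi_{r-1}$ set at Step~6, and Proposition~\ref{prop:CL} guarantees that these initial data extend uniquely to a solution in $\Sdpl$. Since the algorithm simply reads off the coefficients produced by this recurrence, its output is correct modulo $\Sdpl_{\geq N}$.

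For the cost analysis, I would handle the two main loops separately. Steps~1--4 build and tabulate the polynomials $A_{-d}, \ldots, A_{r-1}$: each $A_m$ has degree at most $d$, is first written in the falling factorial basis in $O(d)$ operations (after a one-time precomputation of $1!, \ldots, d!$ in $O(d)$ time, which is legitimate because $d < p$ makes these factorials invertible in $\ell$), then converted to the monomial basis in $\softO(d)$ via \texttt{from\_falling\_factorial}, and finally evaluated at $N-r$ points in $\softO(d+N)$ via \texttt{eval}. Summing over the $r+d$ values of $m$ yields $\softO((r+d)(d+N))$ operations, which under the assumption $N > d$ is $\softO((r+d)N)$.

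Steps~5--8 iterate over the $r$ initial conditions $f_0, \ldots, f_{r-1}$: for each we run the recurrence for $N-r$ steps, and each step is a linear combination of $r+d$ already-computed $\xi$-values with precomputed scalars $A_m(n)$, costing $O(r+d)$ operations. The total is $O(rN(r+d))$, which absorbs the cost of the first loop. The whole argument is pure bookkeeping; there is no real obstacle. The only conceptual point is the use of the hypothesis $d < p$, which is needed exactly once, to ensure that the denominators $\alpha_{r,0} (i-m)!$ appearing in Step~2 are invertible in $\ell$.
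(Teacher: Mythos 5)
Your proposal is correct and follows exactly the route the paper intends: the lemma's ``proof'' in the paper is precisely the derivation of the recurrence $\xi_{n+r}=\sum_{m=-d}^{r-1}A_m(n)\xi_{n+m}$ together with the annotated costs of Algorithm \texttt{solutions\_operator}, and your bookkeeping (including the $O(d)$ precomputation of factorials and the use of $N>d$ to absorb the first loop into the $O(rN(r+d))$ bound of the second) matches it line for line. You also correctly read the hypothesis as $d<p$ (the statement's ``$p<d$'' is a typo), which is indeed what makes the $(i-m)!$ invertible in $\ell$.
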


In particular, Algorithm {\tt solutions\_operator} has a better cost
than {\tt fundamental\_solutions} when $d = O(r^{\omega-1})$.


\section{Computing the p-curvature}\label{sec:pcurv}

In all this section, we work over a field $k$ of characteristic $p >
0$.  We consider a differential system $Y' = AY$ of dimension $r$ and
denote by $A_p$ the matrix of its $p$-curvature. We write $A = \frac
1{f_A} \tilde A$, where $f_A$ is in $k[x]$ and $\tilde A$ is a matrix
with polynomial entries. Let $d = \max(\deg f_A, \deg \tilde A)$, where
$\deg \tilde A$ is the maximal degree of the entries of $\tilde A$.  
We recall (\cite[Prop. 3.2]{Cluzeau03}, \cite[Lemma 1]{BoSc09}) a bound on the size of $A_p$. The bound follows from the recurrence~\eqref{deq:pcurv}, and it is tight.

\begin{lem}
\label{lem:bound}
The entries of the matrix $f_A^p {\cdot} A_p$ are all polynomials of 
degree at most $dp$.
\end{lem}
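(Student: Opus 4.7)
The plan is to prove the bound by induction on $i$, showing that for every $i \ge 1$, the matrix $B_i := f_A^i \cdot A_i$ has polynomial entries of degree at most $di$; taking $i = p$ then yields the lemma.

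For the base case $i = 1$, we have $B_1 = f_A \cdot A_1 = -f_A \cdot A = -\tilde A$, whose entries are polynomials of degree at most $d$ by definition.

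For the inductive step, I assume $B_i = f_A^i A_i$ has entries of degree at most $di$ and I want to bound $B_{i+1} = f_A^{i+1} A_{i+1}$. Using the recurrence $A_{i+1} = A_i' - A \cdot A_i$, I multiply by $f_A^{i+1}$ and handle the two terms separately. For the second term, $f_A^{i+1} \cdot A \cdot A_i = f_A^{i+1} \cdot (\tilde A / f_A) \cdot (B_i / f_A^i) = \tilde A \cdot B_i$, whose entries have degree at most $d + di = d(i+1)$. For the derivative term, I compute
\[
A_i' = \frac{B_i' f_A - i f_A' B_i}{f_A^{i+1}},
\]
so $f_A^{i+1} A_i' = f_A B_i' - i f_A' B_i$, a polynomial matrix whose entries have degree at most $\max(d + (di-1),\, (d-1) + di) = d(i+1) - 1 \le d(i+1)$. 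Subtracting the two polynomial matrices preserves this degree bound, so $B_{i+1}$ has entries of degree at most $d(i+1)$, completing the induction.

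There is essentially no obstacle here; the only subtle point is to verify that the combinations that appear in the derivative stay polynomial (they do, because the factor $f_A^{i+1}$ clears all denominators when $B_i$ is already polynomial), and that the degree contribution from $f_A'$ is indeed $d-1$ rather than $d$. Specializing to $i = p$ yields the claimed bound of $dp$ on the entries of $f_A^p \cdot A_p$.
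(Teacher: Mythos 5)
Your induction is correct and is precisely the argument the paper alludes to when it says the bound ``follows from the recurrence~\eqref{deq:pcurv}'' (the paper itself omits the details, deferring to the cited references, where the same induction on $i$ with $B_i=f_A^i A_i$ appears). Both the clearing of denominators in the derivative term and the degree count $\max(d+di-1,\,(d-1)+di)\le d(i+1)$ are handled correctly, so nothing is missing.
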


The goal of this section is to prove the following theorem.

\begin{theo}
\label{th:pcurv}
There exists an algorithm (presented below) which computes the matrix
of the $p$-curvature of the differential system $Y' = AY$ in
$\softO\big(p d r^\omega)$ operations in $k$.
\end{theo}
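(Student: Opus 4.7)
The strategy is to reduce the computation of $A_p$ to computing a fundamental matrix of solutions of the system in a divided power ring, and then reading off $A_p$ via the new identity encapsulated in Propositions~\ref{prop:pcurv} and~\ref{prop:psiSAp2}. The plan unfolds in three steps. \emph{First}, I would perform the formal shift $x \mapsto x+t$: viewed as a matrix over a suitable base ring $\ell$ adjoined with the divided-power variable $t$, the shifted matrix $A(x+t)$ has $t$-degree at most $d$, and its constant term in $t$ is the original $A(x)$. Choosing $\ell$ so that the denominator $f_A(x)$ becomes a unit---for instance $\ell = k[x]/g$ with $g$ an auxiliary polynomial coprime to $f_A$---places us in the framework of Section~\ref{subsec:dp}, so that $A(x+t) \in M_r(\Sdp)$ and Proposition~\ref{prop:CL} yields a unique fundamental matrix $\hat Y \in \mathrm{GL}_r(\Sdp)$ with $\hat Y(0) = I_r$.

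\emph{Second}, I would compute $\hat Y$ modulo $\Sdp_{\geq N}$ for some $N$ of order $p$---chosen just large enough that the divided-power coefficients needed by the extraction formula are resolved---by invoking Algorithm \texttt{fundamental\_solutions}. According to the cost estimate derived in Section~\ref{ssec:C-L}, this takes $\softO(N r^\omega) = \softO(p r^\omega)$ operations in $\ell$.

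\emph{Third}, I would apply the formula of Propositions~\ref{prop:pcurv} and~\ref{prop:psiSAp2} to extract the image of $A_p$ in $M_r(\ell)$ from the relevant divided-power coefficients of $\hat Y$, and then reconstruct $A_p \in M_r(k(x))$ using the size bound of Lemma~\ref{lem:bound} together with standard polynomial reconstruction or Chinese remaindering, exploiting that $f_A^p\,A_p$ has entries of $x$-degree at most $dp$.

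For the target complexity $\softO(p d r^\omega)$ over $k$, the delicate balance is between the $t$-precision $N = O(p)$ and the cost of arithmetic in $\ell$: the base ring $\ell$ must be rich enough for the reconstruction of $A_p$ to succeed, yet must admit arithmetic at cost $\softO(d)$ per operation so that the $\softO(p r^\omega)$ solver cost is inflated only by a factor of $\softO(d)$. The principal obstacle, however, is theoretical rather than algorithmic: it is the identity of Propositions~\ref{prop:pcurv} and~\ref{prop:psiSAp2} relating $A_p$ to the divided-power expansion of $\hat Y$. This is where the divided-power formalism is genuinely essential, since the usual power series ring $k(x)[[t]]$ offers no analogous identity in characteristic $p$ (the would-be coefficients $1/n!$ are undefined once $n \geq p$). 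Once this identity is in hand, the remaining work amounts to a direct application of the fast divided-power solver from Section~\ref{ssec:C-L}, combined with a careful but routine complexity accounting over the chosen base ring $\ell$.
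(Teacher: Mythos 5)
Your overall architecture is exactly the paper's: localize at a point $a$ of an auxiliary algebra $\ell = k[x]/S$ via the shift $x \mapsto t+a$, solve $Y'=\psi_S(A)Y$ in $\Sdp$ to precision $p$ with \texttt{fundamental\_solutions}, extract $\varphi_S(A_p)$ with Propositions~\ref{prop:pcurv} and~\ref{prop:psiSAp2}, and reconstruct $A_p$ from the degree bound of Lemma~\ref{lem:bound}. Those steps are correct as far as they go. However, a genuine gap hides in what you dismiss as ``routine complexity accounting'': the cost of pulling the answer back from $\ell[t]/t^p$ to $k[x]/S^p$, i.e.\ of inverting the isomorphism $\varphi_S$. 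Arithmetic in $\ell$ at cost $\softO(\deg S)$ per operation is not the bottleneck; the bottleneck is that the solver outputs $\varphi_S(A_p)$ in the shifted variable $t$, and recovering $A_p \bmod S^p$ requires inverting $f(x)\mapsto f(t+a)$. The paper devotes a large part of \S\ref{subsec:local} to exactly this: after a Taylor shift, one must perform, for each of the $r^2$ matrix entries, about $p$ modular compositions modulo a degree-$m$ polynomial (with $m=\deg S$, and $S$ separable so that \eqref{eq:isomquo} holds), for a total of $O(p r^2 m^{(\omega+1)/2})$ operations in $k$ (or $O(p r^2 m^\omega)$ by plain linear algebra). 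Your reconstruction step forces $\deg S \geq d+1$, since $f_A^p A_p$ has entries of degree up to $pd$ and must be determined by its residue modulo $S^p$. With a single generic modulus of degree $m\approx d$, the inversion therefore costs $O(p r^2 d^{(\omega+1)/2})$, which is superlinear in $d$ and exceeds $\softO(p d r^\omega)$ whenever $d$ is large compared to $r$.

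The paper offers two ways out, and your proof needs one of them. Either (a) replace the single modulus by $n = O(d)$ pairwise coprime separable moduli $S_1,\dots,S_n \in \Fp[x]$, each of degree only $O(\log d)$ but of total degree at least $d+1$ (Lemma~\ref{lem:existSi}), run the local computation for each, and glue the local $p$-curvatures by Chinese remaindering as in \S\ref{subsec:glueing}; then every factor $m^{(\omega+1)/2}$ is polylogarithmic and is absorbed by the $\softO$. Or (b) keep a single modulus but of the special form $S = X^m - c$, for which $\varphi_S^{-1}$ admits the explicit $\softO(pm)$ formula of Remark~\ref{rem:cyclo}. Without one of these devices the algorithm you sketch is correct but does not meet the announced complexity bound.
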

It is instructive to compare this cost with the size of the output. By
Lemma \ref{lem:bound}, the latter is an $r \times r$ matrix whose
entries are rational functions whose numerator and denominator have
degree $\simeq pd$, so its size is roughly $p d r^2$ elements of
$k$. Our result $\softO\big(p d r^\omega)$ is quasi-optimal if we
assume that matrix multiplication can be performed in quasi-optimal
time.


\subsection{A formula for the p-curvature}

Let $A_p$ denote the matrix of the $p$-curvature of the differential
system $Y' = AY$ (in the usual monomial basis). The expression of $A_p$
given at the very end of \S \ref{subsec:diffop} is unfortunately not
well-suited for fast computation.  The aim of this subsection is to
give an alternative formula for $A_p$ using the framework of divided
powers.

In order to relate $k(x)$ and a ring $\Sdp$, we pick a separable
polynomial $S \in k[x]$ which is coprime with $f_A$ and set $\ell =
k[x]/S$ (which is thus not necessarily a field). Let $a \in \ell$ be
the class of $x$. We consider the ring homomorphism:
$$\begin{array}{rcl} \varphi_S : \quad k[x] & \to &
  \ell[t]/t^p \smallskip \\ f(x) & \mapsto & f(t+a) \text{ mod }
  t^p. \end{array}$$ Regarding the differential structure, we observe
that $\varphi_S$ commutes with the derivation when $\ell[t]/t^p$ is
endowed with the standard derivation $\frac d{dt}$.  We furthermore
deduce from the fact that $S$ and $f_A$ are coprime that $\varphi_S$
extends to a homomorphism of differential rings $k[x][\frac 1{f_A}]
\to \ell[t]/t^p$ that we continue to denote by $\varphi_S$. We set
$\psi_S = \iota \circ \varphi_S$ where $\iota$ is the canonical
inclusion $\ell[t]/t^p \hookrightarrow \Sdp$ (\emph{cf} \S
\ref{subsec:dp}). As before, $\psi_S$ commutes with the derivation.
Finally, because $S$ is separable, we can check that $\varphi_S$ is
surjective and its kernel is the ideal generated by $S^p$. Hence
$\varphi_S$ induces an isomorphism:
\begin{equation}
\label{eq:isomquo}
\textstyle
k[x]/S^p = k[x][\frac 1{f_A}]/S^p
\stackrel{\sim}{\longrightarrow} \ell[t]/t^p.
\end{equation}

Let $Y_S$ be a fundamental system of solutions of the differential
system $Y' = \psi_S(A) \cdot Y$, \emph{i.e.} $Y_S$ is an $r \times r$
matrix with coefficients in $\Sdp$ such that $Y_S(0) = I_r$ and $Y_S'
= \psi_S(A) \cdot Y_S$. The existence of $Y_S$ is guaranteed by
Proposition \ref{prop:CL}. Moreover, the matrix $Y_S$ is invertible
because $Y_S(0) = I_r$ is.

\begin{prop}\label{prop:pcurv}
Keeping the above notations, we have:
\begin{equation}
\label{eq:psiSAp}
\varphi_S(A_p) = - Y_S^{(p)} \cdot Y_S^{-1}
\end{equation}
where $Y_S^{(p)}$ is the matrix obtained from $Y_S$ by taking the
$p$-th derivative entry-wise.
\end{prop}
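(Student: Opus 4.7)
\medskip
\noindent\textbf{Proof proposal.}
The plan is to transport everything through $\psi_S$, compute the $p$-curvature of the extended system $Y' = BY$ with $B = \psi_S(A)$ by exploiting the fundamental solution $Y_S$, and finally use a characteristic-$p$ Leibniz identity to put the answer in the stated form. First, since $\psi_S$ commutes with derivation and is a ring homomorphism, it is compatible with the recursion~\eqref{deq:pcurv}: applying $\psi_S$ entry-wise to $A_1,A_2,\dots,A_p$ produces exactly the matrices attached to $Y' = BY$ by the same recursion. Hence, if we denote by $B_p$ the matrix of the $p$-curvature of $Y' = BY$, we already have $\psi_S(A_p) = B_p$, and it remains to show $B_p = -Y_S^{(p)} Y_S^{-1}$ as matrices with entries in $\Sdp$. (Under the identification $\psi_S = \iota \circ \varphi_S$ from \S\ref{subsec:dp}, this will give the claimed formula for $\varphi_S(A_p)$.)

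The key step is to work with $Z_S = Y_S^{-1}$ rather than with $Y_S$. Since $Y_S$ is invertible (its constant term is $I_r$), differentiating the identity $Y_S Z_S = I_r$ and using $Y_S' = B \cdot Y_S$ yields $Z_S' = -Z_S \cdot B$. I then define $D_i = Z_S^{-1} \cdot Z_S^{(i)}$ and claim $D_1 = -B$ together with $D_{i+1} = D_i' - B \cdot D_i$ for all $i \geq 1$. The base case is immediate, and the induction step is a one-line computation: from $Z_S^{(i)} = Z_S \cdot D_i$ one gets $Z_S^{(i+1)} = Z_S' D_i + Z_S D_i' = -Z_S B D_i + Z_S D_i' = Z_S(D_i' - B D_i)$. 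Thus $(D_i)_i$ obeys precisely the recursion~\eqref{deq:pcurv} with $A$ replaced by $B$, so $D_p = B_p$.

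At this point $B_p = Y_S \cdot (Y_S^{-1})^{(p)}$, and the final step is to convert this into $-Y_S^{(p)} Y_S^{-1}$. Differentiating $Y_S Y_S^{-1} = I_r$ exactly $p$ times via the Leibniz rule gives
$$0 = \sum_{k=0}^{p} \binom{p}{k}\, Y_S^{(k)} (Y_S^{-1})^{(p-k)},$$
and in characteristic $p$ the binomial coefficients $\binom{p}{k}$ vanish for $0 < k < p$, leaving only the extreme terms: $Y_S^{(p)} Y_S^{-1} + Y_S (Y_S^{-1})^{(p)} = 0$. Combining the three steps yields $\varphi_S(A_p) = B_p = Y_S (Y_S^{-1})^{(p)} = -Y_S^{(p)} Y_S^{-1}$, as required.

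The only genuine obstacle is recognising that the recursion~\eqref{deq:pcurv}, which multiplies on the \emph{left} by $-A$, matches the derivatives of $Z_S = Y_S^{-1}$ (for which right-multiplication by $-B$ appears naturally) rather than those of $Y_S$ itself; once this swap is made, both the inductive identification of $D_p$ with $B_p$ and the final Leibniz-in-characteristic-$p$ trick become purely mechanical.
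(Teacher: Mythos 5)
Your proof is correct, and its skeleton is the same as the paper's: both arguments reduce the claim to the identity $\psi_S(A_p) = Y_S \cdot (Y_S^{-1})^{(p)}$ and then conclude with the characteristic-$p$ Leibniz relation $Y_S^{(p)} Y_S^{-1} + Y_S\,(Y_S^{-1})^{(p)} = 0$ obtained by differentiating $Y_S Y_S^{-1} = I_r$ $p$ times and discarding the middle binomial coefficients; that second half of your write-up is essentially word-for-word the paper's. The only genuine divergence is in how the intermediate identity is derived. The paper works inside the differential module $(M,\partial)$ attached to $Y' = \psi_S(A)Y$: it expresses the canonical basis $\underline e$ in terms of the horizontal vectors $\underline y$ (the columns of $Y_S$), iterates $\partial$ on that relation to get $\partial^p(\underline e) = \trp{Z_S^{(p)}}\cdot\trp{Y_S}\cdot\underline e$, and reads off the matrix of $\partial^p$. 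You instead stay entirely at the level of the defining recursion~\eqref{deq:pcurv}: a one-line induction shows that $D_i = Y_S\cdot (Y_S^{-1})^{(i)}$ satisfies $D_1 = -B$ and $D_{i+1} = D_i' - B D_i$, so $D_p$ \emph{is} the $p$-curvature matrix of $Y'=BY$ by definition. Your route is slightly more elementary and self-contained (it uses only the recursion, which the paper takes as the very definition of $A_p$, and avoids the dual-basis computation), at the cost of being a little less conceptual; the paper's version makes visible why the formula holds, namely that the $p$-curvature is diagonalized to a constant matrix in the basis of horizontal sections. Your preliminary observation that $\psi_S(A_i) = B_i$ for all $i$ (because $\psi_S$ is a differential ring homomorphism on $k[x][\frac 1{f_A}]$, where the whole recursion lives) is the right thing to say and is also implicitly used by the paper.
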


\begin{proof}
We set $Z_S = Y_S^{-1}$ and let $(M, \partial)$ denote the differential
module over $\Sdp$ associated to the differential system $Y' = \psi_S(A)
Y$. Let $y_1, \ldots, y_r$ denote the column vectors of $Y_S$. They are
all solutions of the system $Y' = \psi_S(A) Y$, meaning that $\partial
(y_i) = 0$ for all $i$. Furthermore, if $(e_1, \ldots, e_r)$ is the
canonical basis of $(\Sdp)^r$, we have the matrix relations:
$\trp{Y_S} \cdot \underline e = \underline y$ and
$\underline e = \trp{Z_S} \cdot \underline y$
where $\underline y$ (resp. $\underline e$) is the column vector whose
coordinates are the \emph{vectors} $y_i$'s (resp. the $e_i$'s). 
Applying $\partial$ to the above relation, we find
$\partial (\underline e) = \trp{Z'_S} \cdot \underline y
+ \trp{Z_S} \cdot \partial(\underline y) =  \trp{Z'_S} \cdot \underline y$
and iterating this $p$ times, we deduce
$\partial^p (\underline e) = 
\trp{Z_S^{(p)}} \cdot \underline y = 
\trp{Z_S^{(p)}} \cdot \trp{Y_S} \cdot \underline e$.
On the other hand, the matrix $\psi_S(A_p)$ of the $p$-curvature is 
defined by the relation $\partial^p (\underline e) = \trp{\psi_S(A_p)} 
\cdot \underline e$. Therefore we get $\psi_S(A_p) = Y_S \cdot 
Z_S^{(p)}$.
Now differentiating $p$ times the relation $Y_S Z_S = I_r$, we find
$Y_S^{(p)} Z_S + Y_S \cdot Z_S^{(p)} = 0$. Combining this with the
above formula for $\psi_S(A_p)$ concludes the proof.
\end{proof}

In our setting, the matrix $A_p$ has coefficients in $k[x][\frac 1
  {f_A}]$ (\emph{cf} Lemma \ref{lem:bound}), from which we deduce that
$\psi_S(A_p)$ has actually coefficients in the subring $\ell[t]/t^p$ of
$\Sdp$. Therefore, using Eq.~\eqref{eq:psiSAp}, one can compute
$\psi_S(A_p)$ knowing only $Y_S$ modulo the ideal $\Sdp_{\geq
  2p}$. 

One can actually go further in this direction and establish a variant
of Eq.~\eqref{eq:psiSAp} giving an expression of $\psi_S(A_p)$ which
involves only the reduction of $Y_S$ modulo $\Sdp_{\geq p}$. To make
this precise, we need an extra notation. Given an integer $i \in
[0,p)$ and a polynomial $f \in \ell[t]/t^p$ (resp. a matrix $M$ with
  coefficients in $\ell[t]/t^p$), we write $\coeff(f,i)$
  (resp. $\coeff(M,i)$) for the coefficient in $t^i$ in $f$ (resp. in
  $M$).

\begin{prop}
\label{prop:psiSAp2}
Keeping the above notations, we have:
\begin{align}
\psi_S(A_p) & = - \bar Y_S \cdot Y_S^{(p)}(0) \cdot \bar Y_S^{-1} \nonumber \\
& = \bar Y_S \cdot \coeff(A \cdot \bar Y_S, \,p{-}1) \cdot \bar Y_S^{-1}
\label{eq:psiSAp2}
\end{align}
where we have set $\bar Y_S = Y_S \, \bmod \, \Sdp_{\geq p}$.
\end{prop}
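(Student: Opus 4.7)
The plan is to derive the first equality by a conjugation argument that refines Proposition~\ref{prop:pcurv}, then to deduce the second equality from the first by identifying $Y_S^{(p)}(0)$ with a coefficient of $\psi_S(A) \cdot \bar Y_S$.

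Set $N := \psi_S(A_p)$, so that Proposition~\ref{prop:pcurv} reads $N = -Y_S^{(p)} Y_S^{-1}$ in $\Sdp$, and introduce the conjugate $F := Y_S^{-1} N Y_S$. The key input is the classical identity $A_p' = [A, A_p]$, which encodes the tautological fact that the $k(x)$-linear map $\partial_A^p$ commutes with $\partial_A$; I would derive it from the computation $\partial_A(\partial_A^p(e_i)) = \partial_A^p(\partial_A(e_i))$ in the canonical basis. Applying the differential ring homomorphism $\psi_S$ then yields $N' = [\psi_S(A), N]$ in $\Sdp$. Plugging $Y_S' = \psi_S(A) Y_S$ and the resulting $(Y_S^{-1})' = -Y_S^{-1} \psi_S(A)$ into the Leibniz expansion of $F'$, the four terms cancel in pairs and $F' = 0$. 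Since the derivation on $\Sdp$ has kernel exactly the constants $\ell$, $F$ equals its constant term $F(0) = Y_S(0)^{-1} N(0) Y_S(0) = N(0)$, whence $N = Y_S \cdot N(0) \cdot Y_S^{-1}$ in $\Sdp$. Evaluating Proposition~\ref{prop:pcurv} at $t=0$ gives $N(0) = -Y_S^{(p)}(0)$, and reducing modulo $\Sdp_{\geq p}$ produces the first equality.

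For the second equality, I would expand $Y_S = \sum_{i \geq 0} Y_i \gamma_i(t)$ in $\Sdp$, so that $Y_S^{(p)}(0) = Y_p$. The relation $Y_S' = \psi_S(A) Y_S$ reduced modulo $\Sdp_{\geq p}$ reads $\psi_S(A) \cdot \bar Y_S \equiv \sum_{n=0}^{p-1} Y_{n+1} \gamma_n(t)$, so the coefficient of $\gamma_{p-1}(t)$ in $\psi_S(A) \cdot \bar Y_S$ is $Y_p$. Under the isomorphism $\Sdp/\Sdp_{\geq p} \simeq \ell[t]/t^p$ of Corollary~\ref{cor:quoSdp}, which sends $\gamma_i(t) \leftrightarrow t^i/i!$, this term corresponds to $(Y_p/(p-1)!)\, t^{p-1}$, so Wilson's theorem $(p-1)! \equiv -1 \pmod p$ gives $\coeff(\psi_S(A) \cdot \bar Y_S, p-1) = -Y_p = -Y_S^{(p)}(0)$. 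Substituting into the first equality then yields the second.

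The main obstacle is identifying the right invariant $F$; its constancy is what reduces the $p$-th derivative information hidden in $Y_S^{(p)}$ to the constant matrix $N(0)$, and this constancy is a direct consequence of $\partial_A^p$ commuting with $\partial_A$. Once this is in place, the rest is routine bookkeeping, with the Wilson factor $(p-1)! \equiv -1$ being the only other subtle detail.
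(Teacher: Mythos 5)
Your proof is correct, and the second equality is handled exactly as in the paper (the reduction of $Y_S'=\psi_S(A)Y_S$ modulo $\Sdp_{\geq p}$ plus the Wilson factor $(p-1)!\equiv -1$). Where you diverge is the first equality. The paper differentiates the relation $Y_S'=\psi_S(A)Y_S$ $p$ times: since $f\mapsto f^{(p)}$ is a derivation in characteristic $p$ and $\psi_S(A)^{(p)}=0$ (its entries live in degree $<p$), the matrix $Y_S^{(p)}$ satisfies the \emph{same} system, and uniqueness in Proposition~\ref{prop:CL} forces $Y_S^{(p)}=Y_S\cdot Y_S^{(p)}(0)$, which is then substituted into Eq.~\eqref{eq:psiSAp}. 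You instead invoke the classical commutation identity $A_p'=[A,A_p]$ (correctly derived from $\partial_A\circ\partial_A^p=\partial_A^p\circ\partial_A$ and the recurrence~\eqref{deq:pcurv}), push it through $\psi_S$, and observe that the gauge transform $Y_S^{-1}\psi_S(A_p)Y_S$ has zero derivative, hence is the constant $\psi_S(A_p)(0)=-Y_S^{(p)}(0)$ because the derivation on $\Sdp$ has kernel exactly $\ell$. Both arguments ultimately rest on the same special feature of $\Sdp$ --- your ``kernel of the derivation is $\ell$'' is precisely uniqueness in Cauchy--Lipschitz for the system $Y'=0$ --- but the inputs differ: the paper trades on the characteristic-$p$ fact that the $p$-th derivative is a derivation annihilating $\psi_S(A)$, while you trade on the horizontality of the $p$-curvature. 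Your route has the merit of making transparent \emph{why} $\psi_S(A_p)$ is conjugate to a constant matrix (it is a morphism of the differential module, so it is constant in any horizontal basis); the paper's route is slightly shorter and stays entirely inside the solution matrix $Y_S$, never needing the identity $A_p'=[A,A_p]$.
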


\begin{proof}
Differentiating $p$ times the relation $Y_S' = \psi_S(A) \cdot Y_S$, we
observe that $Y_S^{(p)}$ is solution of the same differential system 
$Y' = \psi_S(A) Y$. Hence, thanks to uniqueness in Cauchy--Lipschitz
Theorem, we have the relation $Y_S^{(p)} = Y_S \cdot Y_S^{(p)}(0)$.
The first part of the Proposition follows by plugging this in 
Eq.~\eqref{eq:psiSAp} and reducing the result modulo $\Sdp_{\geq p}$. To establish the second part, it is now enough
to notice that the relation $Y_S' = \psi_S(A) \cdot Y_S$ implies:
$$Y_S^{(p)}(0) = (A \cdot Y_S)^{(p-1)}(0) 
= - \coeff(A \cdot \bar Y_S,\, p{-}1)$$
the minus sign coming from $(p-1)! \equiv -1 \pmod p$.
\end{proof}

\begin{rem}
We can rephrase Proposition \ref{prop:psiSAp2} as follows: letting
$y_1, \ldots, y_r$ denote the column vectors of $Y_S$ and $\bar y_i
\in (\ell[t]/t^p)^r$ be the reduction of $y_i$, the $p$-curvature of
$A$ modulo $t^p$ is the linear endomorphism of $(\ell[t]/t^p)^r$ whose
matrix in the basis $(\bar y_1, \ldots, \bar y_r)$ is $\coeff(A \cdot
\bar Y_S, \,p{-}1)$.  It is worth remarking that the latter matrix has
coefficients in the subring $\ell$ of $\ell[t]/t^p$.
\end{rem}

Remembering Eq.~\eqref{eq:isomquo}, we conclude that Proposition 
\ref{prop:psiSAp2} allows us to compute the image of the $p$-curvature 
$A_p$ modulo $S^p$. The strategy of our algorithm now becomes clear:
we first compute $A_p$ modulo $S^p$ for various polynomials $S$ and, when we
have collected enough congruences, we put them together to reconstruct 
$A_p$. The first step is detailed in \S \ref{subsec:local} just below
and the second step is the subject of \S \ref{subsec:glueing}.


\subsection{Local calculations}
\label{subsec:local}

In all this subsection, we fix a separable polynomial $S \in k[x]$
and denote by $m$ its degree. Our goal is to design an algorithm for
computing the matrix $A_p$ modulo $S^p$. 
After Proposition \ref{prop:psiSAp2}, the main remaining algorithmic
issue is the effective computation of the isomorphism $\varphi_S$ and
its inverse. 

\medskip

\noindent {\bf Applying $\varphi_S$ and its inverse.}
We remark that $\varphi_S$ factors as follows:
$$\begin{array}{ccccc} 
 k[x] / S^p  & \to  &  k[x,t]/\langle S, (t-x)^p\rangle & \to &  k[x,t]/\langle S, t^p\rangle \smallskip \\ 
 x & \mapsto & t & \mapsto & t+a. \end{array}$$ 
Applying the right-hand mapping, or its inverse, amounts to doing
a polynomial shift in degree $p$ with coefficients in $k[x]/S$. Using 
the divide-and-conquer algorithm of~\cite{GaGe97}, this
can be done in $\softO(p)$ arithmetic operations in $k[x]/S$,
which is $\softO(p m)$ operations in $k$. Thus, we are left with the 
left-hand factor, say $\varphi^\star_S$. Applying it is straightforward
and can be achieved in $\softO(pm)$ operations in $k$. It then only 
remains to explain how one can apply efficiently ${\varphi^\star_S}^{-1}$.

We start by determining the image of $x$ by ${\varphi^\star_S}^{-1}$;
call it $y={\varphi^\star_S}^{-1}(x)$; we may identify it with its
canonical preimage in $k[x]$, which has degree less than $pm$.
Write $y=\sum_{0 \le i < p} \zeta_i(x^p) x^i$, with every $\zeta_i$ in
$k[x]$ of degree less than $m$ (so that $\zeta_i(x^p)$ has degree less
than $pm$). Its image through $\varphi^\star_S$ is $\sum_{0 \le i < p}
\zeta_i(t^p) t^i$, which is $\sum_{0 \le i < p} \zeta_i(x^p) t^i$,
since $x^p=t^p$ in $k[x,t]/\langle S, (t-x)^p\rangle$.

Since $\varphi^\star_S(y)=x$, we deduce that $\zeta_0(x^p)=x \bmod S$
and $\zeta_i(x^p)=0 \bmod S$ for $i =1,\dots,p-1$. The first equality
implies that $x^p$ generates $k[x]/S$, so the fact that $\zeta_0$ has
degree less than $m$ implies that $\zeta_0$ is the unique polynomial
with this degree constraint such that $\zeta_0(x^p)=x \bmod S$. The other
equalities then imply that $\zeta_i=0$ for  $i=1,\dots,p-1$.

In order to compute $\zeta_0$, we first compute $\nu=x^p \bmod S$,
using $\softO(m \log(p))$ operations in $k$. Then, we have to find the
unique polynomial $\zeta_0$ of degree less than $m$ such that
$\zeta_0(\nu) = x \bmod S$. In general, one can compute $\zeta_0$ in
$O(m^\omega)$ operations in $k$ by solving a linear system. In the
common case where $m < p$, there exists a better solution.  Indeed,
denote by ${\rm tr}: k[x]/S \to k$ the $k$-linear trace form and write
$t_i={\rm tr}(\nu^i)$ and $t'_i={\rm tr}(x \nu^i)$, for
$i=0,\dots,m-1$. Then formulas such as those in~\cite{Rouillier99}
allow us to recover $\zeta_0$ from ${\bf t}=(t_0,\dots,t_{m-1})$ and
${\bf t'}=(t'_0,\dots,t'_{m-1})$ in time $\softO(m)$. These formulas
require that $m < p$ and that $S'$ be invertible modulo $S$, which is
ensured by our assumption that $S$ is separable. To compute ${\bf t}$
and ${\bf t'}$, we can use Shoup's power projection
algorithm~\cite{Shoup94}, which takes $O(m^{(\omega+1)/2})$ operations
in $k$.

Once $\zeta_0$ is known, to apply the mapping ${\varphi^\star_S}^{-1}$
to an element $g(x,t)$, we proceed coefficient-wise in $t$. Write
$g=\sum_{0 \le i < p} g_i(x) t^i$, with all $g_i$ of degree less than
$m$. Then
${\varphi^\star_S}^{-1}(g) 
= \sum_{0 \le i < p} \left (  g_i(\zeta_0) \bmod T \right)(x^p)\, x^i$
where $T$ is the polynomial obtained by raising all coefficients 
of $S$ to the power $p$, so that $S(x)^p=T(x^p)$.

Computing $T$ takes $O(m \log(p))$ operations in $k$; then, computing
each term $ g_i(\zeta_0) \bmod T$ can be done using the Brent-Kung
modular composition algorithm for $O(m^{(\omega+1)/2})$ operations in
$k$; the total is $O(m^{(\omega+1)/2} p)$. Finally, the evaluation at
$x^p$ and the summation needed to obtain ${\varphi^\star_S}^{-1}(g)$
do not involve any arithmetic operations.

\begin{rem}
\label{rem:cyclo}
In the case where $S = x^m - c$ (where $c \in k$ and $p$ does not 
divide $m$), there actually exists a quite simple explicit formula for 
${\varphi_S^\star}^{-1}$: it takes $t$ to $x$ and $x$ to $c^q x^{pn}$ 
where $n$ and $q$ are integers satisfying the B\'ezout's relation $pn + 
qm = 1$. Using this, one can compute ${\varphi_S^\star}^{-1}(g)$
in $\softO(pm)$ operations in $k$ in this special case.
\end{rem}

\noindent
{\bf Conclusion.}
Let us call {\tt phiS} and {\tt phiS\_inverse} the two subroutines
described above for computing $\varphi_S$ and its inverse respectively.
Proposition \ref{prop:psiSAp2} leads to the following algorithm for
computing the $p$-curvature modulo $S^p$.

\noindent\hrulefill

\noindent {\bf Algorithm} {\tt local\_p\_curvature}

\noindent{\bf Input:} a polynomial $S$ and a matrix $A_S \in M_r(k[x]/S^p)$

\noindent{\bf Output:} the $p$-curvature of the system $Y' = A_S \:Y$

\smallskip\noindent 1.\ $A_{S,\ell} = \texttt{phiS}(A_S)$

\noindent \phantom{1.\ }%
{\sc Cost:} $\softO(p r^2 m)$ operations in $k$ (with $m = \deg S$)

\smallskip\noindent 2.\ compute a fund. system of solutions
$Y_S \in M_r(\ell[t]/t^p)$

\noindent \phantom{2.\ }%
of the system $Y' = A_{S,\ell} Y$ at precision $p$.

\noindent \phantom{2.\ }%
{\sc Cost:} $\softO(p r^\omega)$ op. in $\ell$ using {\tt fundamental\_solutions}

\noindent \phantom{2.\ }%
{\sc Remark:} Here $\ell = k[x]/S$

\smallskip\noindent 3.\ $A_{p,\ell} = 
Y_S \cdot \coeff(A Y_S, \,p{-}1) \cdot Y_S^{-1}$

\noindent \phantom{3.\ }%
at precision $O(t^p)$

\noindent \phantom{3.\ }%
{\sc Cost:} $\softO(p r^\omega)$ operations in $\ell$

\smallskip\noindent 4.\ $A_p = \texttt{phiS\_inverse}(A_{p,\ell})$

\noindent \phantom{4.\ }%
{\sc Cost:} $\softO(p r^2 m^\omega)$ operations in $k$ in general

\noindent \phantom{4.\ }%
\phantom{{\sc Cost:} }$\softO(p r^2 m^{(\omega+1)/2})$ operations in $k$
if $m<p$

\smallskip\noindent 5.\ {\bf return} $A_p$.

\noindent\hrulefill

\smallskip

To conclude with, it is worth remarking that implementing the
algorithm {\tt local\_p\_curvature} can be done using usual power
series arithmetic: indeed, we only need to perform computations in the
quotient $\Sdp/\Sdp_{\geq p}$ which is isomorphic to $\ell[t]/t^p$ by
Corollary~\ref{cor:quoSdp}. Furthermore, we note that if we are using the
algorithm \texttt{fundamental\_solutions} at line 2, then $Y_S^{-1}$
can be computed by performing an extra loop in
\texttt{fundamental\_solutions}; indeed the matrix $Z$ we obtain this
way is exactly $Y_S^{-1}$.


\subsection{Gluing}
\label{subsec:glueing}

We recall that we have started with a differential system $Y' = AY$
(with $A = \frac 1{f_A} \tilde A$) and that our goal is to compute the
matrix $A_p$ of its $p$-curvature.  Lemma~\ref{lem:bound} gives bounds
on the size of the entries of $A_p$. We need another lemma, which
ensures that we can find enough small ``evaluation points'' (lying in
a finite extension of $k$). Let $\Fp$ denote the prime subfield of
$k$.

\begin{lem}
\label{lem:existSi}
Given a positive integer $D$ and a nonzero polynomial $f \in k[x]$, 
there exist pairwise coprime polynomials $S_1, \ldots, S_n \in \Fp[x]$ 
with $n \leq D$ such that:

\noindent
$\bullet$ $\sum_{i=1}^n \deg S_i \geq D$

\noindent
$\bullet$
for all $i$, the polynomial $S_i$ is coprime with $f$ and has
degree at most $1 + \log_p(D + \deg f)$.
\end{lem}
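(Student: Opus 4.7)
The plan is to build the $S_i$'s out of distinct monic irreducible polynomials over $\F_p$ of small degree, invoking the classical identity
$$x^{p^e} - x \;=\; \prod_{d \mid e} \;\prod_{\substack{P \text{ monic} \\ \text{irred of deg } d}} P(x)$$
in $\F_p[x]$. The total degree of the right-hand side is $p^e$, so the sum of degrees of all monic irreducibles in $\F_p[x]$ whose degree divides $e$ is exactly $p^e$.

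First I would set $e = \lceil \log_p(D + \deg f) \rceil$, which satisfies the two inequalities we need: $p^e \geq D + \deg f$ and $e \leq 1 + \log_p(D + \deg f)$. (A small case analysis handles $D + \deg f = 1$, where one just picks $S_1 = x$.) Let $\mathcal I$ be the set of monic irreducibles in $\F_p[x]$ whose degree divides $e$. Distinct elements of $\mathcal I$ are automatically pairwise coprime (since each is irreducible), and the identity above gives $\sum_{P \in \mathcal I} \deg P = p^e$.

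Next I would discard from $\mathcal I$ those $P$ that divide $f$: since these are pairwise coprime and all divide $f$, the sum of their degrees is at most $\deg f$. Writing $\mathcal I'$ for the remaining irreducibles, we get
$$\sum_{P \in \mathcal I'} \deg P \;\geq\; p^e - \deg f \;\geq\; D.$$
Each $P \in \mathcal I'$ is coprime with $f$, pairwise coprime to the others, and of degree at most $e \leq 1 + \log_p(D + \deg f)$.

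Finally I would enumerate $\mathcal I'$ as $P_1, P_2, \ldots$ and take $S_1, \ldots, S_n$ to be the shortest prefix whose degree sum reaches $D$; such an $n$ exists by the previous inequality, and since each $\deg S_i \geq 1$ we have $n \leq D$. All three conditions are satisfied. There is no real obstacle: the only nontrivial input is the factorisation of $x^{p^e} - x$, which is standard; the rest is bookkeeping on the choice of $e$ and on discarding the (few) irreducible factors of $f$.
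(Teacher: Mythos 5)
Your proof is essentially the paper's: the monic irreducibles of degree dividing $e$ are exactly the minimal polynomials over $\F_p$ of the elements of $\F_{p^e}$, which is what the paper enumerates, and your choice of $e$, the exclusion of factors of $f$, and the shortest-prefix argument for $n \le D$ all match the paper's steps. (Both write-ups gloss over the same small point: since $f$ lies in $k[x]$ with $k$ possibly larger than $\F_p$, ``$P$ does not divide $f$'' is weaker than ``$P$ is coprime with $f$''; one should discard every $P$ sharing a root with $f$, which costs at most $e\,\deg f$ rather than $\deg f$ in total degree and is harmless for the intended application.)
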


\begin{proof}
Let $m$ be the smallest integer such that $p^m \geq D + \deg f$.
Clearly $m \leq 1 + \log_q(D + \deg f) \leq 1 + \log_p(D + \deg f)$.
Let $\Fpm$ be an extension of $\Fp$ of degree $m$ and $K$ be the
compositum of $k$ and $\Fpm$.
Let $S_1, \ldots, S_t$ be the minimal polynomials over $\Fp$ (without 
repetition) of all elements in $\Fpm \subset K$ which are not a root of
$f$. We then have $\deg S_i \leq m$ for all $i$ and $\sum_{i=1}^t \deg 
S_i \geq p^m - \deg f \geq D$.  It remains now to define $n$ as the 
smallest integer such that $\sum_{i=1}^n \deg S_i \geq D$. Minimality 
implies $\sum_{i=1}^{n-1} \deg S_i < D$ and thus $n \leq D$.  Therefore 
$S_1, \ldots, S_n$ satisfy all the requirements of the lemma.
\end{proof}

The above proof yields a concrete algorithm for producing a sequence 
$S_1, \ldots, S_n$ satisfying the properties of Lemma \ref{lem:existSi}: 
we run over elements in $\Fpm$ and, for each new element, append its 
minimal polynomial over $\Fp$ to the sequence $(S_i)$ unless it is not 
coprime with $f$. We continue this process until the condition 
$\sum_{i=1}^n \deg S_i \geq D$ holds. Keeping in mind the logarithmic 
bound on $m$, we find that the complexity of this algorithm is at most 
$\softO(D + \deg f)$ operations in $k$. 
Let us call \texttt{generate\_points} the resulting routine: it takes as 
input the parameters $f$ and $D$ and return an admissible sequence $S_1, 
\ldots, S_n$.

We are now ready to present our algorithm for computing the 
$p$-curvature:

\noindent\hrulefill

\noindent {\bf Algorithm} {\tt p\_curvature}

\noindent{\bf Input:} a matrix $A$ written as 
$A = \frac 1{f_A} \cdot \tilde A$

\noindent{\bf Output:} the $p$-curvature of the differential system
$Y' = AY$

\smallskip\noindent 1.\ $S_1, \ldots, S_n = 
\texttt{generate\_points}(f_A,d+1)$

\noindent \phantom{1.\ }%
{\sc Cost:} $\softO(d)$ operations in $k$

\noindent \phantom{1.\ }%
{\sc Remark:} we have $n = O(d)$ and $\deg S_i = O(\log d)$, $\forall i$

\smallskip\noindent 2.\ {\bf for} $i=1,\dots,n$:

\noindent \phantom{2.\ }\hspace{0.4cm}$A_{i,p} = 
\text{\tt local\_p\_curvature}(S_i, A \bmod S_i^p)$

\noindent \phantom{2.\ }%
{\sc Cost:} $\softO(p d r^\omega)$ operations in $k$

\smallskip\noindent 3.\ compute
$B \in M_r(k[x])$ with entries of degree $\leq pd$

\noindent \phantom{3.\ }%
such that $B \equiv f_A^p \cdot B_i \pmod{S_i^p}$ for all $i$

\noindent \phantom{3.\ }%
{\sc Cost:} $\softO(pdr^2)$ operations in $k$

\smallskip\noindent 4.\ {\bf return} $\frac 1 {f_A^p} \cdot B$

\noindent\hrulefill

\smallskip

\noindent
In view of the previous discussion and Lemma \ref{lem:bound}, the 
correctness and the cost analysis of the algorithm \texttt{p\_curvature} 
are both straightforward. Hence, Theorem \ref{th:pcurv} is proved.

We conclude this subsection with three remarks.
First, when applying Chinese Remainder Theorem (CRT) on line 3 of 
Algorithm \texttt{p\_curvature}, we notice that all moduli $S_i^p$ are 
polynomials in $x^p$. This allows the following optimization. Writing
$f_A^p \cdot B_i \equiv \sum_{j=0}^{p-1} B_{i,j}(x^p) x^j \pmod {S_i^p(x)}$
and denoting by $C_j$ the unique solution of degree at most $d$ to the 
congruence system:
$$B_j(x) \equiv B_{i,j}(x) \pmod{T_i(x)}
\quad \text{where } T_i(x^p) = S_i^p(x)$$
we have $B = \sum_{j=0}^{p-1} B_j(x) x^j$. This basically allows us
to replace one CRT with polynomials of degree $dp$ by $p$ CRT with
polynomials of degree $d$. We save this way the polynomial factors 
in $\log(p)$ in the complexity.

Second, instead of working with $n$ polynomials $S_i$, one may 
alternatively choose a unique polynomial $S$ of the form $S = X^m - a$ 
where $m \geq d$ is an integer not divisible by $p$ and $a \in k$ are 
such that $S$ and $f_A$ are coprime. This avoids the use of Chinese 
Remainder Theorem and the resulting complexity stays in 
$\softO(pdr^\omega)$ provided that we use Remark \ref{rem:cyclo} in 
order to compute the inverse of $\varphi_S$.

Third, we observe that the algorithm {\tt p\_curvature} is very easily 
parallelizable. Indeed, each iteration of the main loop (on line 2) is 
completely independent from the others. Thus, they all can be 
performed in parallel. Moreover, according to the first remark (just 
above), the application of the Chinese Remainder Theorem (on line 3) 
splits into $p r^2$ smaller independent problems and can therefore be 
efficiently parallelized as well.


\begin{figure*}\centering
\renewcommand{\arraystretch}{1.2}
{\scriptsize 

\begin{tabular}{l|ll||>{\raggedleft}p{4em}|>{\raggedleft}p{4em}|>{\raggedleft}p{4em}|>{\raggedleft}p{4em}|>{\raggedleft}p{4em}|>{\raggedleft}p{4em}|>{\raggedleft}p{4em}|>{\raggedleft}p{4em}|l}
\cline{4-11}
 \multicolumn{3}{c|}{} & \multicolumn{8}{c|}{$\mathbf{p}$} \cr
\cline{4-11}
 \multicolumn{3}{c|}{} & \centering $\mathbf{157}$ & \centering $\mathbf{281}$ & \centering $\mathbf{521}$ & \centering $\mathbf{983}$ & \centering $\mathbf{1\:811}$ & \centering $\mathbf{3\:433}$ & \centering $\mathbf{6\:421}$ & \centering $\mathbf{12\:007}$ & \cr
\cline{2-11}
 & \multirow{2}{*}{$d = 5$,} & \multirow{2}{*}{$r = 5$} & $0.39$~s & $0.71$~s & $1.22$~s & $2.34$~s & $4.41$~s & $8.93$~s & $18.0$~s & $36.1$~s \cr
 &  &  & $0.26$~s & $0.76$~s & $2.69$~s & $9.05$~s & $32.6$~s & $145$~s & $593$~s & $2\:132$~s \cr
\cline{2-11}
 & \multirow{2}{*}{$d = 5$,} & \multirow{2}{*}{$r = 11$} & $1.09$~s & $2.05$~s & $3.65$~s & $7.05$~s & $12.6$~s & $26.7$~s & $53.3$~s & $109$~s \cr
 &  &  & $1.25$~s & $3.70$~s & $12.8$~s & $45.5$~s & $163$~s & $725$~s & $2\:942$~s & \hfill$-$\hfill\null \cr
\cline{2-11}
 & \multirow{2}{*}{$d = 5$,} & \multirow{2}{*}{$r = 20$} & $2.93$~s & $5.25$~s & $9.52$~s & $17.7$~s & $32.5$~s & $68.1$~s & $139$~s & $288$~s \cr
 &  &  & $4.29$~s & $12.4$~s & $42.5$~s & $153$~s & $548$~s & $2\:460$~s & \hfill$-$\hfill\null & \hfill$-$\hfill\null \cr
\cline{2-11}
 & \multirow{2}{*}{$d = 11$,} & \multirow{2}{*}{$r = 20$} & $6.89$~s & $13.3$~s & $22.6$~s & $45.0$~s & $80.4$~s & $167$~s & $342$~s & $711$~s \cr
 &  &  & $11.6$~s & $34.7$~s & $121$~s & $486$~s & $1\:943$~s & \hfill$-$\hfill\null & \hfill$-$\hfill\null & \hfill$-$\hfill\null \cr
\cline{2-11}
 & \multirow{2}{*}{$d = 20$,} & \multirow{2}{*}{$r = 20$} & $14.0$~s & $25.1$~s & $49.9$~s & $94.0$~s & $176$~s & $357$~s & $733$~s & $1\:472$~s \cr
 &  &  & $27.0$~s & $84.5$~s & $314$~s & $1\:283$~s & \hfill$-$\hfill\null & \hfill$-$\hfill\null & \hfill$-$\hfill\null & \hfill$-$\hfill\null \cr
\cline{2-11}
\end{tabular}
}
\renewcommand{\arraystretch}{1}

\smallskip

\centering
{\scriptsize Running times obtained with Magma V2.19-4 on an AMD Opteron 
6272 machine at 2GHz and 8GB RAM, running Linux.\hspace{0.5em}\null}

\caption{Average running time on random inputs of various sizes}\label{fig:table}
\end{figure*}


\subsection{The case of differential operators}

To conclude with, we would like to discuss the case of a differential
operator
$L = a_r \partial^r + a_{r-1} \partial^{r-1} + \cdots +
a_1 \partial + a_0$
with $a_i \in k[x]$ for all $i$, of maximal degree $d$. 

Recall that the $p$-curvature of $L$ is that of the differential
module $(\ring\langle \partial\rangle /\ring\langle
\partial\rangle L, \partial_{-C})$, where $C$ is the companion matrix
associated to $L$ as in~\eqref{eq:comp}. Applying directly the
formulas in Proposition \ref{prop:psiSAp2} requires the knowledge of
the solutions of the system $Y' = -C Y$.
It is in fact easier to compute solutions for the system $X'=\trp{C}
X$, since we saw that these solutions are the vectors of the form
$\trp{(y, y', \ldots, y^{(r-1)})}$, where $y$ is a solution of $L$.
This is however harmless: the $p$-curvatures $A_p$ and $B_p$ of the
respective systems $Y'=-C Y$ and $X'=\trp{C} X$ (which are so-called
adjoint) satisfy $A_p = -\trp{B_p}$. Thus, we can use the formulas
given above to compute $\varphi_S(B_p)$, and deduce $\varphi_S(A_p)$
for a negligible cost. Equivalently, one may notice that the
fundamental matrices of solutions of our two systems are transpose of
one another, up to sign.

Moreover, instead of using the second formula of Proposition 
\ref{prop:psiSAp2} to compute the local $p$-curvatures, we recommend 
using the first one, which is
$\varphi_S(B_p) = - X_S \cdot X_S^{(p)}(0) \cdot X_S^{-1}$ where
$X_S$ is a fundamental system of solutions of $X' = \trp{C}X$ and
$\bar X_S$ denotes its reduction in $M_r(\ell[t]/t^p)$. If $f_0,
\ldots, f_{r-1}$ are solutions of the system \eqref{eq:systop}, the
$(i,j)$-th entry of $X_S$ is just $f_j^{(i)}$. Hence the matrices
$\bar X_S$ and $X_S^{(p)}(0)$ can be obtained from the knowledge of
the image of $f_i$'s modulo $\Sdp_{\geq p+r}$ just by reorganizing
coefficients (and possibly multiplying by some factorials depending on
the representation of elements of $\Sdp$ we are using). 

As for the $f_i$'s, they can be computed by the algorithm
\texttt{solutions\_operator} (provided its assumptions are
satisfied). We need finally to compute $X_S^{-1}$: since $X_S(0)$ is
the identity matrix, this can be done either using Newton iterator, a
divide-and-conquer approach or a combination of both, which computes
the inverse of $X_S$ at a small precision, and uses divide-and-conquer
techniques for higher ones (the latter being the most efficient in
practice).  All these remarks do speed up the execution of our
algorithms when $d$ is not too large compared to $r$.

Last but not least, we notice that, in the case of differential 
operators, the matrix $A_p$ is easily deduced from its first column. 
Indeed, writing $A_p = (a_{i,j})_{0 \leq i,j < r}$ and letting
$c_j = a_{r-1,j} \partial^{r-1} + \cdots + a_{1,j} \partial + 
a_{0,j} \in k(x)\langle\partial\rangle$
be the differential operator obtained from the $j$-column of $A_p$,
it is easily checked that $c_{j+1}$ is the remainder in the Euclidean
division of $\partial c_j$ by $L$. Comparing orders, we further find
$c_{j+1} = \partial c_j - \frac{\lc(c_j)}{a_r} L$
where $\lc(c_j)$ is the leading coefficient of $c_j$. This remark is 
interesting because it permits to save memory: indeed, instead of storing 
all local $p$-curvatures $A_{p,\ell}$, we can just store their first 
column. Doing this, we can reconstruct the first column of $A_p$ using 
the Chinese Remainder Theorem (\emph{cf} \S \ref{subsec:glueing}) and 
then compute the whole matrix $A_p$ using the recurrence.


\section{Implementation and timings}\label{sec:timings}


We implemented our algorithms in Magma in the case of differential 
operators; the source code is available at 
\url{https://github.com/schost}. 
Figure~\ref{fig:table} gives running times for random operators of 
degrees $(d,r)$ in $k[x]\langle \partial \rangle$ and compares them with 
running times of (a fraction free version of) Katz's algorithm which 
consists in computing the recursive sequence $(A_i)$ until $i=p$. In 
each cell, the first line (resp. the second line) corresponds to the 
running time obtained with our algorithm (resp. Katz's algorithm); a
dash indicates that the corresponding running time exceeded one hour. 
Our benchmarks rather well reflect the predicted dependence with respect 
to $p$: quasi-linear for our algorithm and quadratic for Katz's 
algorithm.

Larger examples (than those presented in Fig.~\ref{fig:table}) are also 
reachable: for instance, we computed the first column of the 
$p$-curvature of a ``small'' multiple of the operator $\phi_H^{(5)}$ 
considered in~\cite[Appendix~B.3]{BoHaMaZe07} modulo the prime $27449$. 
This operator has bidegree $(d,r) = (108,28)$. The computation took 
about 19 hours and the size of the output in human-readable format is 
about 1GB (after bzip2 compression, it decreases to about 300MB).

{\tiny

}

\end{document}